\numberwithin{equation}{section}
\numberwithin{figure}{section}
\theoremstyle{plain}
\newtheorem{thm}{Theorem}
\newtheorem{lem}[thm]{Lemma}
\newtheorem{cor}[thm]{Corollary}
\theoremstyle{definition}
\newtheorem{defn}[thm]{Definition}
\newtheorem{conj}[thm]{Conjecture}
\theoremstyle{remark}
    \algnewcommand\algorithmicto{\textbf{to}}
    \algnewcommand\To{\algorithmicto{} }
    \algnewcommand\algorithmicswitch{\textbf{switch}}
    \algnewcommand\algorithmiccase{\textbf{case}}
\newcommand{\eps}{\varepsilon}
\newcommand{\Ber}{\mathsf{Ber}}
\newcommand{\blfootnote}[1]{{\renewcommand{\thefootnote}{\roman{footnote}}\footnotetext[0]{#1}}}
\begin{document}

\title{Iterated Entropy Derivatives and Binary Entropy Inequalities}
\author{Tanay Wakhare${}^{1}$}

\begin{abstract}
We embark on a systematic study of the $(k+1)$-th derivative of $x^{k-r}H(x^r)$, where $H(x):=-x\log x-(1-x)\log(1-x)$ is the binary entropy and $k\geq r\geq 1$ are integers. Our motivation is the conjectural entropy inequality $\alpha_k H(x^k)\geq x^{k-1}H(x)$, where $0<\alpha_k<1$  is given by a functional equation. The $k=2$ case was the key technical tool driving recent breakthroughs on the union-closed sets conjecture. We express $ \frac{d^{k+1}}{dx^{k+1}}x^{k-r}H(x^r)$ as a rational function, an infinite series, and a sum over generalized Stirling numbers. This allows us to reduce the proof of the entropy inequality for real $k$ to showing that an associated polynomial has only two real roots in the interval $(0,1)$, which also allows us to prove the inequality for fractional exponents such as $k=3/2$. The proof suggests a new framework for proving tight inequalities for the sum of polynomials times the logarithms of polynomials, which converts the inequality into a statement about the real roots of a simpler associated polynomial.
\end{abstract}
\maketitle

\blfootnote{$^{1}$~Department of Electrical Engineering and Computer Science, MIT, Cambridge, MA 02139, USA}
\blfootnote{ \quad   \email{twakhare@mit.edu}}
\blfootnote{ \quad   MSC2020: 94A17, 26C10, 11B65, 05A10}

\section{Introduction}

The union-closed sets conjecture is a notorious open problem, stating that any set family $\mathcal{F}\subseteq 2^{[n]}$ which is \textit{union-closed} (so that the union of two sets in $\mathcal{F}$ is also in the system) contains a "popular" element of the ground set contained in at least a $1/2$ fraction of the sets of $\mathcal{F}$. 

Although the conjecture is still unproven, Gilmer made a recent breakthrough stating that any union-closed set system contains an element in at least an $0.01$ fraction of the sets in $\mathcal{F}$. The constant $0.01$ was quickly improved to $\frac{3-\sqrt{5}}{2} \approx 0.38197$ \cite{AHS22, Sawin22, CL22}. Building upon more sophisticated coupling arguments suggested by \cite{Sawin22,Yu23,Cambie22}, the current best constant is $\approx 0.38237$ \cite{liu2024improving}, though the method suffers natural limitations. The survey \cite{Cambie23} summarizes recent progress and barriers, but new ideas will be needed to prove the full union-closed sets conjecture. 



The $k=2$ case of the following inequality \eqref{mainineq} was conjectured by Gilmer, and was one of the key technical tools underlying his breakthrough. This case was proved using computer calculations by \cite{AHS22}. Studying the extension to approximate $k$-union closed set systems led to \cite{Yuster23} conjecturing inequality \eqref{mainineq} for every integer $k\geq 2$ and proving it for $k=3,4$. It later emerged that Boppana proved the $k=2$ case several decades earlier \cite{Boppana85}. He recently republished a simplified proof \cite{Boppana23}, which forms the basis for Sawin's \cite{Sawin22} improvements and is the proof we build upon. The main contribution of this paper is to reduce the proof of the real $k\geq 1$ case to a conjecture about the roots of an explicit polynomial, which suggests a general framework to prove tight inequalities involving the sum of logarithms of polynomials.

\begin{conj}\label{mainthm} Let $k\geq 1$ be real and $ 0<\alpha_k<1$ be the unique solution of 
\begin{equation}\label{alphafunc}
    \alpha_k = \frac{1}{(1+\alpha_k)^{k-1}}
\end{equation}
in $(0,1)$. Then 
\begin{equation}\label{mainineq}
    \alpha_k H(x^k) \geq x^{k-1}H(x), \quad 0\leq x\leq 1,
\end{equation}
where $H(x):=-x\log x - (1-x)\log (1-x)$ is the binary entropy. We have equality at $x = 0,\frac{1}{1+\alpha_k},1$. 
\end{conj}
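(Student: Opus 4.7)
The plan is to analyze the deficit
\[
F_k(x) := \alpha_k H(x^k) - x^{k-1} H(x), \qquad 0 \le x \le 1,
\]
and to show $F_k(x) \ge 0$ on $[0,1]$ with equality only at the three listed points. I begin by verifying the equality locations directly. At $x=0,1$ the vanishing is immediate from $H(0)=H(1)=0$. At $x_0:=1/(1+\alpha_k)$ the functional equation \eqref{alphafunc} gives $x_0^{k-1}=\alpha_k$ and also $x_0^k = \alpha_k/(1+\alpha_k) = 1-x_0$, so $H(x_0^k) = H(1-x_0) = H(x_0)$ and hence $F_k(x_0)=\alpha_k H(x_0)-\alpha_k H(x_0)=0$. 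A short computation using $H'(x)=\log((1-x)/x)$ and the identity $H(x_0) = (k\alpha_k+1)\log(1+\alpha_k)/(1+\alpha_k)$ derived from $x_0^k=1-x_0$ then shows $F_k'(x_0)=0$ as well; this is forced if we expect $F_k \ge 0$ with an interior zero at $x_0$, and it turns $x_0$ into a double zero, giving four roots of $F_k$ in $[0,1]$ counted with multiplicity.

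The plan is then to push all nontrivial content into the $(k+1)$-th derivative, where the structural results of the paper apply. Applying the rational-function representation of $\frac{d^{k+1}}{dx^{k+1}}[x^{k-r}H(x^r)]$ with $r=k$ and $r=1$ and subtracting, I expect to obtain
\[
F_k^{(k+1)}(x) \;=\; \frac{N_k(x)}{D_k(x)},
\]
where $D_k(x)>0$ on $(0,1)$ is an explicit product of powers of $x$ and of factors of the form $1-x^j$, and $N_k$ is an explicit polynomial whose coefficients depend on $\alpha_k$. The problem is then reduced to a root-counting statement: I would aim to prove that $N_k$ has exactly two real roots $a<b$ in $(0,1)$, yielding a three-interval sign pattern for $F_k^{(k+1)}$.

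Given that sign pattern, I would integrate back $k+1$ times. Each integration introduces a constant determined by the boundary behavior of $F_k^{(j)}$ as $x\to 0^+$ and $x\to 1^-$ (where the logarithmic singularities of $H(x)$ and $H(x^k)$ must be tracked carefully and combined to cancel or to produce a definite sign) and by the double-zero data $F_k(x_0)=F_k'(x_0)=0$. A generalized Rolle count then bounds the number of zeros of each $F_k^{(k+1-j)}$, and in particular forces $F_k$ to have no zeros in $(0,1)\setminus\{x_0\}$, so that evaluating $F_k$ at any convenient sub-interval point pins the sign as $F_k\ge 0$.

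The main obstacle is the real-root count for $N_k$ on $(0,1)$. For a fixed real $k$, this is a finite problem amenable to Sturm's theorem, Descartes' rule of signs, or an explicit factorization, which is how I would handle the concrete case $k=3/2$ and the integer cases $k\ge 5$ not yet in the literature. For the full real parameter range $k\ge 1$, however, $N_k$ depends transcendentally on $k$ through $\alpha_k$, so one must either propagate the base cases $k\in\{2,3,4\}$ by a continuity/monotonicity-in-$k$ argument, or exhibit a structural factorization $N_k(x)=(x-a_k)(x-b_k)Q_k(x)$ in which the residual factor $Q_k$ is sign-definite on $(0,1)$ uniformly in $k$. Establishing such uniform control of $Q_k$, presumably by exploiting the rational, infinite-series, and generalized-Stirling-number representations of the iterated derivative that the paper develops, is where I expect the real difficulty to lie.
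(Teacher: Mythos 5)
Your skeleton matches the paper's own treatment of this statement --- which, note, is only a \emph{conditional} reduction (the statement is a conjecture; the paper's Theorem~\ref{thm:reduction} reduces it to the root-counting Conjecture~\ref{rootconj}, exactly the step you flag as ``where the real difficulty lies''). However, as written your argument has a gap in the root accounting that prevents the Rolle step from closing. If the numerator $N_k$ of $F_k^{(k+1)}$ has two roots in $(0,1)$, then Rolle applied $k+1$ times only bounds the number of roots of $F_k$ in $[0,1]$ by $k+3$, counted with multiplicity. You inventory four roots: simple zeros at $0$ and $1$ and a double zero at $x_0=1/(1+\alpha_k)$. That leaves $k-1$ potential unaccounted roots (one already for $k=2$), so you cannot conclude that $F_k$ has no further sign changes. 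The missing ingredient is that $x=0$ is a root of multiplicity $k$: from the expansion $x^{k-r}H(x^r)=-rx^k\log x+x^k-\sum_{\ell\ge1}x^{k+r\ell}/(\ell(\ell+1))$ one checks that $F_k^{(t)}(0)=0$ for all $0\le t\le k-1$. This brings the total to exactly $k+3$, saturating the Rolle bound, and only then does the argument force $F_k$ to be one-signed on $[0,1]$ (with the sign fixed by positivity near $0$, as in Yuster's Lemma~3.3).

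The second gap is the treatment of non-integer exponents. Your entire mechanism runs through $F_k^{(k+1)}$, which is undefined for $k=3/2$, and propagating from the integer base cases $k\in\{2,3,4\}$ by continuity cannot reach a dense set of exponents. The paper's device is the two-parameter deficit $f_{k,r}(x):=\alpha_{k/r}H(x^k)-x^{k-r}H(x^r)$ for integers $k>r\ge1$: here the $(k+1)$-st derivative exists, the same root inventory holds (multiplicity $k$ at $0$, a double root at $(1+\alpha_{k/r})^{-1/r}$, a simple root at $1$), and the substitution $x\mapsto x^{1/r}$ converts $f_{k,r}\ge0$ into inequality \eqref{mainineq} for the exponent $k/r$. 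This covers all rationals $q>1$, and only then does continuity in $q$ of $\alpha_q$, $H(x^q)$, and $x^{q-1}$ extend the inequality to all real $q\ge1$. Without this reparametrization (or some substitute for fractional-order differentiation), your proposal proves at most the integer cases, conditionally on the root count for $N_k$.
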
 

Throughout, all logarithms are to base $e$. Lemma \ref{lem:alphabound} shows that the functional equation \eqref{alphafunc} has a unique solution satisfying $1/k <\alpha_k<1$, and Lemma \ref{lem:alphak-estimate} shows that $\alpha_k = \frac{\log k}{k} + O_k \left( \frac{\log \log k}{k}\right)$ asymptotically for large $k$.

The natural transformation for this problem is $x = \frac{1}{1+y}$, since we now study this inequality over $y$ in $(0,\infty)$ instead of over $x$ in $(0,1)$, which maps the root at $x = \frac{1}{1+\alpha_k}$ to a root at $y=\alpha_k$. Writing $x_k=\frac{1}{1+\alpha_k}$, this functional equation is equivalent to $x_k+x_k^k=1$. The equation $x+x^k=1$ corresponds to the characteristic function of Fibonacci type recurrences like $F_{n}=F_{n-1}+F_{n-k}$. This explains the appearance of the golden ratio in the $k=2$ case studied for the union closed sets conjecture, since $x+x^2=1$ has roots closely connected to the golden ratio. This also motivates studying $x_k,\alpha_k$ in terms of generalized Fibonacci polynomials, related to \cite{Cigler22}.

We show that Conjecture \ref{rootconj} implies Conjecture \ref{mainthm}. This is a strong statement about polynomial roots, since the following polynomial $p_{k,r}(x)$ has degree $k^2+kr-r$, but we conjecture it to only have two roots in $(0,1)$. The following conjecture also allows us to rigorously prove Conjecture \ref{mainthm} for any rational exponent using a finite calculation, such as for $k=3/2$.

\begin{conj}\label{rootconj}Let $k>r\geq 1$ be integers. Define the \textbf{entropy polynomial} 
\begin{equation}
h_{k,r}(x):=\sum_{j=0}^{k-1} x^{rj}\sum_{v=0}^j  \frac{(-1)^{j-v}}{v+1} \binom{rv+k}{k}  \binom{k}{j-v}
\end{equation}
and let $\alpha_k$ satisfy the functional equation \eqref{alphafunc}. Then the polynomial 
\begin{equation}
   p_{k,r}(x):=\alpha_{k/r} k(1-x^r)^{k} h_{k,k}(x) - r(1-x^k)^{k}h_{k,r}(x)
\end{equation}
has exactly two real roots in $(0,1)$, counting multiplicity. 
\end{conj}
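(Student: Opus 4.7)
The plan is to identify the two real roots explicitly, then rule out any others. The conjecture's motivation --- that $p_{k,r}$ arises as the numerator of an iterated derivative encoding the tangent point of the entropy inequality --- strongly suggests that the two roots coincide as a double root rather than being two distinct simple roots. Setting $y = x^r$, the inequality $\alpha_{k/r} H(y^{k/r}) \geq y^{k/r-1} H(y)$ has its interior tangent point at $y = x_{k/r} := 1/(1+\alpha_{k/r})$, which in the original variable is $\tilde{x} := x_{k/r}^{1/r}$. This $\tilde{x}$ satisfies $\tilde{x}^r + \tilde{x}^k = 1$, yielding the key algebraic identities $(1-\tilde{x}^r)^k = \tilde{x}^{k^2}$, $(1-\tilde{x}^k)^k = \tilde{x}^{rk}$, and $\alpha_{k/r} = \tilde{x}^{k-r}$. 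I expect $\tilde{x}$ to be the unique real root of $p_{k,r}$ in $(0,1)$ and to have multiplicity exactly two.

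First I would verify $p_{k,r}(\tilde{x}) = 0$ by substituting these identities into the definition of $p_{k,r}$; after cancellation the vanishing reduces to the identity $k\,\tilde{x}^{(k+1)(k-r)} h_{k,k}(\tilde{x}) = r\, h_{k,r}(\tilde{x})$ on entropy polynomials evaluated at $\tilde{x}$. Proving this identity should be accessible by manipulating the explicit double-sum definition of $h_{k,r}$ via Vandermonde-type binomial identities, repeatedly applying $\tilde{x}^r + \tilde{x}^k = 1$ to collapse powers. Next I would verify $p_{k,r}'(\tilde{x}) = 0$ by differentiating both sides of this identity and applying the same manipulations, and check $p_{k,r}''(\tilde{x}) \neq 0$ to pin down the multiplicity at exactly two.

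The main obstacle is to show that $q_{k,r}(x) := p_{k,r}(x)/(x - \tilde{x})^2$, a polynomial of degree $k^2 + kr - r - 2$, has constant sign on $(0,1)$. Standard tools like Descartes' rule of signs give far too weak a bound at this degree. A promising strategy is to expand $q_{k,r}$ in the Bernstein basis on $[0,1]$ and verify that all coefficients share a sign --- a sufficient but not necessary condition that would yield a very clean proof. Failing this, one could attempt induction on $k$ or on $r$ using recurrences among the entropy polynomials $h_{k,r}$, possibly reformulated via the generalized Stirling numbers mentioned in the abstract. A third option is to exploit the derivative interpretation directly: $p_{k,r}$ should appear as the numerator of the $(k+1)$-th derivative of $\alpha_{k/r} H(x^k) - x^{k-r} H(x^r)$, and one could try to upper-bound the zeros of this high-order derivative using Rolle-type arguments combined with Taylor expansions of the underlying function at $x = 0$ and $x = 1$ --- though converting Rolle's theorem into an \emph{upper} bound on zeros is the delicate point that would need a clever auxiliary function. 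For specific rational $k, r$, Sturm's theorem reduces the conjecture to a finite computation, which is how the paper handles cases like $k = 3/2$.
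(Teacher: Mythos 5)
This statement is a \emph{conjecture} in the paper: no proof of it is given there (the paper only proves that it \emph{implies} the main entropy inequality, and verifies individual $(k,r)$ pairs by explicit root-finding). So your proposal cannot be matched against a paper proof; it has to stand on its own, and it does not. Quite apart from the fact that the decisive step (showing $q_{k,r}$ has constant sign on $(0,1)$) is left as a list of untried strategies, the plan's central premise is false. You predict that the two roots of $p_{k,r}$ in $(0,1)$ coincide in a double root at the tangent point $\tilde{x}=(1+\alpha_{k/r})^{-1/r}$. But $p_{k,r}$ is (up to a nonvanishing factor) the numerator of the $(k+1)$-st derivative of $f_{k,r}(x)=\alpha_{k/r}H(x^k)-x^{k-r}H(x^r)$, and there is no reason for a high-order derivative to vanish where the function itself has a double root: Rolle's theorem pushes the roots of successive derivatives strictly into the open intervals between the roots of $f_{k,r}$ (which sit at $0$, $\tilde{x}$, and $1$), so after $k+1$ differentiations the two surviving roots have migrated away from $\tilde{x}$. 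The paper's own worked example confirms this: for $k=3$, $r=2$ the roots of $p_{3,2}$ are reported as $\approx 0.204863$ and $0.74186$, two \emph{distinct simple} roots, neither equal to $\tilde{x}=\alpha_{3/2}\approx 0.754878$. One can also check directly that your proposed identity fails there: with $\tilde{x}^2+\tilde{x}^3=1$ one computes $3\tilde{x}^{4}h_{3,3}(\tilde{x})\approx 4.088$ while $2h_{3,2}(\tilde{x})\approx 4.063$, so $p_{3,2}(\tilde{x})\neq 0$.

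Consequently the entire first half of the plan (verifying $p_{k,r}(\tilde{x})=p_{k,r}'(\tilde{x})=0$ and $p_{k,r}''(\tilde{x})\neq 0$, then dividing out $(x-\tilde{x})^2$) cannot be carried out, and the remaining suggestions (Bernstein coefficients of $q_{k,r}$, induction on $k$ or $r$, Rolle-type upper bounds on zeros of the $(k+1)$-st derivative) are offered without any supporting computation. The third suggestion is also circular in context: the paper's Theorem \ref{thm:reduction} already runs Rolle's theorem in the productive direction (from a root count for $p_{k,r}$ down to a root count for $f_{k,r}$); obtaining an \emph{upper} bound on the zeros of $f^{(k+1)}_{k,r}$ from information about $f_{k,r}$ is precisely the open problem, not a route around it. If you want to make progress on specific $(k,r)$, the paper's approach --- substituting $x=1/(1+y)$, clearing denominators, and applying Descartes' rule of signs or Sturm's theorem to the resulting integer-coefficient (after scaling) polynomial --- is the workable finite computation; a general proof would require controlling the sign pattern of those transformed coefficients for all $k>r\geq 1$, which remains open.
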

Note that Lemma \ref{lem:alphabound} states that $\alpha_{k/r}\frac{k}{r}>1$, so that the first polynomial is multiplied by a larger constant factor. 

\begin{thm}\label{thm:reduction}
If Conjecture \ref{rootconj} holds for a particular $k>r$ pair, then inequality \eqref{mainineq} holds for the exponent $k/r$. If Conjecture \ref{rootconj} holds for all coprime $k>r\geq 1$, then inequality \eqref{mainineq} holds for all real $k\geq 1$.
\end{thm}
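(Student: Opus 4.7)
My plan is to recast inequality \eqref{mainineq} via a substitution, relate the $(k+1)$-st derivative of the difference function to the polynomial $p_{k,r}$, and use a Rolle-type zero-counting argument to reduce sign-positivity to the polynomial root count. After substituting $y=x^r$, the inequality for exponent $k/r$ is equivalent to the non-negativity of
\[ f(x) := \alpha_{k/r} H(x^k) - x^{k-r} H(x^r) \]
on $[0,1]$, with $f(0)=f(x_*)=f(1)=0$ where $x_* := (1+\alpha_{k/r})^{-1/r}$. A direct computation, using $y_*^{k/r-1}=\alpha_{k/r}$ and the identity $y_*+y_*^{k/r}=1$ that follow from \eqref{alphafunc}, yields $f'(x_*)=0$ as an algebraic consequence, so $x_*$ is at least a double zero of $f$. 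The leading asymptotics $f(x)\sim(r-\alpha_{k/r}k)x^k\log x$ as $x\to 0^+$ and $f(x)\sim(r-\alpha_{k/r}k)(1-x)\log(1-x)$ as $x\to 1^-$ force $f>0$ on deleted neighborhoods of the endpoints, since $\alpha_{k/r}k>r$ by Lemma \ref{lem:alphabound} makes both leading coefficients negative.

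I would next invoke the earlier sections' formulas expressing $\frac{d^{k+1}}{dx^{k+1}} x^{k-r}H(x^r)$ as a rational function with numerator $h_{k,r}$ (and similarly $h_{k,k}$ in the special case $r=k$) to obtain
\[ f^{(k+1)}(x) = \frac{C\cdot p_{k,r}(x)}{x\,(1-x^k)^{k}(1-x^r)^{k}} \]
for a positive constant $C$. Since the denominator is strictly positive on $(0,1)$, Conjecture \ref{rootconj} is equivalent to the statement that $f^{(k+1)}$ has exactly two zeros in $(0,1)$ counting multiplicity.

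The heart of the proof is an iterated Rolle argument. From the expansions above one checks that $f^{(j)}(0^+)=0$ for $j=0,1,\ldots,k-1$ while $f^{(j)}(1^-)=0$ only for $j=0$; writing $I_j$ for the multiplicity-counted number of zeros of $f^{(j)}$ in $(0,1)$ and $L_j,R_j\in\{0,1\}$ for the indicators of these limit zeros at the two endpoints, applying Rolle's theorem on each closed gap bounded by consecutive (genuine or limit) zeros gives
\[ I_{j+1} \geq I_j + L_j + R_j - 1. \]
Summing this inequality from $j=0$ up to $j=k$, with $L_0+R_0=2$, $L_j+R_j=1$ for $1\leq j\leq k-1$, and $L_k+R_k=0$, produces the key chain $I_{k+1}\geq I_0$. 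Now suppose for contradiction that $f(x_0)<0$ for some $x_0\in(0,1)$. Since $f>0$ near both endpoints, the connected component of $\{f<0\}$ containing $x_0$ has two sign-change zeros $a<b$ in $(0,1)$, and a short case analysis (distinguishing $x_*\notin[a,b]$, $x_*\in(a,b)$, and $x_*\in\{a,b\}$---the last forcing odd multiplicity at least $3$ at $x_*$) shows $I_0\geq 4$ in every case, contradicting Conjecture \ref{rootconj}.

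Finally, the extension from rational to real $k$ is immediate: both $\alpha_k$ and $H(x^k)$ depend continuously on $k$ at each fixed $x\in[0,1]$, so validity of \eqref{mainineq} for a dense set of rationals implies the inequality for all real $k\geq 1$. The main obstacle is the precise formulation of the Rolle-type inequality $I_{j+1}\geq I_j+L_j+R_j-1$, because $f^{(j)}$ diverges at $x=1$ for $j\geq 1$ and at $x=0$ for $j\geq k$: one must carefully identify the closed subintervals on which Rolle's theorem is valid and verify that the boundary divergences contribute no spurious zeros. Once the telescoped chain $I_{k+1}\geq I_0$ is established, the entire argument collapses to the numerical contradiction $2\geq I_{k+1}\geq I_0\geq 4$.
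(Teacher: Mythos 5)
Your proposal is correct and follows essentially the same route as the paper: identify the zeros of $f_{k,r}$ at $0$ (multiplicity $k$), $x_*=(1+\alpha_{k/r})^{-1/r}$ (double), and $1$, express $f_{k,r}^{(k+1)}$ as $p_{k,r}$ over a positive denominator, apply Rolle $k+1$ times to cap the zero count, and pass from rationals to reals by continuity. The only cosmetic differences are that you verify positivity near both endpoints by asymptotics and run the zero count as a telescoped contradiction, whereas the paper cites Yuster's lemma for positivity near $0$ and counts all $k+3$ roots directly.
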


For instance, a quick calculation shows that Conjecture \ref{rootconj} holds for $k=3,r=2$. A natural approach is to use a special case of Descartes' rules of signs, which states that if a polynomial has two coefficient sign changes, then it has either $0$ or $2$ positive real roots. Numerically, under the change of variables $x = \frac{1}{1+y}$, the polynomial $(1+y)^{k^2+kr-r} p_{k,r}\left(\frac{1}{1+y}\right)$ always has two sign changes. Since $h_{k,r}$ has degree $kr-r$ and $p_{k,r}$ has degree $k^2+kr-r$, the factor of $(1+y)^{k^2+kr-r}$ ensures that the resulting expression is a polynomial while only introducing extra roots at $y=-1$. If this has at most two real roots for $y$ in $(0,\infty)$, these correspond to at most two real roots of $p_{k,r}(x)=p_{k,r}\left(\frac{1}{1+y}\right)$ in $(0,1)$. However, the coefficients in $y$ become unwieldy double or triple sums, from which it is difficult to deduce the sign pattern.


Some example cases are
\begin{align*}
    h_{1,1}(x) = 1, \quad\quad h_{2,2}(x) = 1+x^2, \quad\quad h_{3,3}(x) = 1+7x^3+x^6,\quad \quad h_{4,4}(x) = 1+31x^4+31x^8+x^{12}.
\end{align*}
and
\begin{align*}
    h_{4,1}(x) &= 1-\frac{3}{2}x+x^2-\frac{1}{4}x^3, \quad\quad\quad\thinspace\thinspace h_{4,2}(x) = 1+\frac{7}{2}x^2 -\frac23 x^4 +\frac16 x^6, \\
    h_{4,3}(x) &= 1+\frac{27}{2}x^3+6x^6-\frac14 x^9, \quad\quad h_{4,4}(x) = 1+31x^4+31x^8+x^{12}.
\end{align*}

This motivates the study of the binomial sums
\begin{align}\label{hkrjdef}
h_{k,r,j}:=\sum_{v=0}^j  \frac{(-1)^{j-v}}{v+1} \binom{rv+k}{k}  \binom{k}{j-v}     
\end{align}
for all values of the parameters $k,r,j$. While this is rational valued in general, several rescaled integer valued multiples of $h_{k,r,j}$ do not appear in the OEIS. For instance, using a variation of the proof of Lemma \ref{thm:finite-diff} using finite difference operators, we can show that $h_{k,r,k} = \frac{1}{k+1}\binom{r-1}{k}$ for all $r,k\geq 1$, which is $0$ for $r\leq k$. An interesting and related open problem is computing a simple representation for $h_{k,r}(x)$ under the change of variables $x\mapsto 1-x$ or $x^r \mapsto 1-x^r$, mirroring the symmetry of the binary entropy $H(x) = H(1-x)$.

Our key technical tool is several equivalent expansions for the $(k+1)$-st derivative of $x^{k-r}H(x^r)$, which are all functions of $x^r$. The first expansion expresses the derivative as a single infinite series, the second factors out a single root at $0$ and a root of multiplicity $k$ at $1$, which leaves the numerator as a polynomial. The last generalizes and simplifies \cite[Lemma 3.5, Lemma 3.8]{Yuster23} and rewrites the $(k+1)$-st derivative in terms of a different rational basis, with coefficients given by generalized Stirling numbers.
\begin{thm}\label{thm3}Let $S(k,\ell|\alpha,\beta,\gamma)$ denote the generalized Stirling numbers of Hsu and Shiue, defined in Equation \eqref{stirdef2}. Let $k\geq r\geq 1$ be positive integers. For $0<x<1$ we have
\begin{align}
 \left(\frac{d}{dx}\right)^{k+1}x^{k-r}H(x^r) &= -r\cdot k! \sum_{\ell=0}^\infty \binom{k+r\ell}{k} \frac{1}{\ell+1} x^{r\ell-1}\label{thm3eq1} \\
&= -\frac{r\cdot k!}{x(1-x^r)^{k}} \sum_{j=0}^{k-1} x^{rj}\sum_{v=0}^j\frac{(-1)^{j-v}}{v+1}\binom{rv+k}{k} \binom{k}{j-v} \label{mainthmfullstatement}\\
&=  -\sum_{\ell=0}^{k-1} \ell!S(k,\ell+1|1,r,k-r)r^{\ell+2} \frac{x^{r\ell-1}}{(1-x^r)^{\ell+1}}.
 \end{align}
\end{thm}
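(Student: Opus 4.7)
The plan is to prove \eqref{thm3eq1} directly from the Taylor expansion of $H$, then obtain \eqref{mainthmfullstatement} by identifying the rational function behind the first series, and finally match the third formula against the Hsu--Shiue defining relation for the generalized Stirling numbers.

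For the infinite series \eqref{thm3eq1}, I would start from $-(1-u)\log(1-u) = u - \sum_{n\ge 2} u^n/(n(n-1))$ with $u = x^r$, giving
\[ x^{k-r}H(x^r) = -r x^k \log x + x^k - \sum_{n\ge 2} \frac{x^{k-r+rn}}{n(n-1)}. \]
The $x^k$ term is killed by $d^{k+1}/dx^{k+1}$. For the logarithmic piece, the trick is $x^k \log x = \partial_s x^{k+s}|_{s=0}$, so that $\frac{d^{k+1}}{dx^{k+1}}(x^k \log x) = \partial_s[(k+s)(k+s-1)\cdots s\cdot x^{s-1}]|_{s=0} = k!/x$, since at $s=0$ only the derivative of the otherwise vanishing factor $s$ inside the falling factorial survives. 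Term-by-term differentiation of the remaining series and reindexing $\ell = n-1$ then produces \eqref{thm3eq1}.

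For \eqref{mainthmfullstatement} I would set $y = x^r$ and examine $(1-y)^k$ times the series in \eqref{thm3eq1}. Expanding $(1-y)^k = \sum_u (-1)^u \binom{k}{u} y^u$ via the Cauchy product immediately puts exactly $h_{k,r,j}$ from \eqref{hkrjdef} as the coefficient of $y^j$. The substance of the step is showing that this product is actually a polynomial of degree at most $k-1$. The key observation is that under $1 \le r \le k$ we have $\binom{k-r}{k} = 0$, so the polynomial $\ell \mapsto \binom{k+r\ell}{k}$ vanishes at $\ell = -1$, and hence $\binom{k+r\ell}{k}/(\ell+1)$ is an honest polynomial $P(\ell)$ in $\ell$ of degree $k-1$. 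The standard generating-function fact that $\sum_{\ell\ge 0} P(\ell) y^\ell = Q(y)/(1-y)^{\deg P + 1}$ for some polynomial $Q$ of degree $\le \deg P$ then forces $h_{k,r,j} = 0$ for $j \ge k$ and isolates the desired numerator polynomial.

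For the Hsu--Shiue expansion, I would apply their defining relation $(t|\alpha)_k = \sum_L S(k,L|\alpha,\beta,\gamma)(t-\gamma|\beta)_L$ with $\alpha = 1$, $\beta = r$, $\gamma = k-r$, and $t = rn + k$. Direct computation yields $(rn+k|1)_k = k!\binom{k+rn}{k}$ and $(rn+r|r)_L = r^L (n+1)_L$, so the identity becomes $k!\binom{k+rn}{k} = \sum_L S(k,L|1,r,k-r)\, r^L (n+1)_L$. Under $k\ge r$ the $L=0$ term vanishes, so I can safely divide by $n+1$ via $(n+1)_L/(n+1) = \ell!\binom{n}{\ell}$ with $L = \ell + 1$. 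Expanding $y^\ell/(1-y)^{\ell+1} = \sum_{n \ge \ell} \binom{n}{\ell} y^n$ on the target right-hand side and matching $y^n$ coefficients with the series in \eqref{thm3eq1} then closes the loop. The main obstacle is the polynomial truncation in the second step, where the $k\ge r$ hypothesis is used essentially and the clever move is to avoid verifying $h_{k,r,j} = 0$ for each $j \ge k$ directly from \eqref{hkrjdef}; the other two steps largely reduce to locating the right identities and bookkeeping.
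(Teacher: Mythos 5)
Your proposal is correct, and the first two identities are established essentially as in the paper: the expansion $x^{k-r}H(x^r) = -rx^k\log x + x^k - \sum_{n\ge 2}x^{k+r(n-1)}/(n(n-1))$ followed by termwise differentiation is exactly Lemmas \ref{lem:fund} and \ref{lem8}, and the Cauchy product with $(1-x^r)^k$ is Lemma \ref{lemterminating}. For the truncation $h_{k,r,j}=0$ ($j\ge k$) the paper invokes its finite-difference Lemma \ref{thm:finite-diff}, applying $\Delta^k$ to $q(\ell)=\frac{1}{\ell+1}\binom{r\ell+k}{k}$; your observation that $\binom{k-r}{k}=0$ makes $q$ a polynomial of degree $k-1$, so that $\sum_\ell q(\ell)y^\ell$ has denominator $(1-y)^k$ and polynomial numerator of degree at most $k-1$, is the same underlying fact packaged through the standard generating-function lemma rather than through iterated differences. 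This is slightly cleaner for the purpose of Theorem \ref{thm3}, though the paper's Lemma \ref{thm:finite-diff} also extracts closed forms for the nonvanishing $h_{k,r,j}$ that are used elsewhere. Where you genuinely diverge is the third identity: the paper derives it (Lemma \ref{lem:stir1}) by Laplace-transforming the Dobi\'nski-type formula \eqref{stirdef4}, mapping $w\mapsto 1/w$, and carefully arranging for the non-transformable $1/x$ terms to cancel, whereas you evaluate the Hsu--Shiue change-of-basis relation directly at $t=rn+k$ with $(\alpha,\beta,\gamma)=(1,r,k-r)$, note that the $L=0$ term $S(k,0|1,r,k-r)=(k-r|1)_k$ vanishes for $1\le r\le k$ so that division by $n+1$ is legitimate, and then resum via $\sum_{n\ge\ell}\binom{n}{\ell}y^n=y^\ell/(1-y)^{\ell+1}$. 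Your route is purely algebraic, avoids the analytic detour and its convergence bookkeeping, and I verified it reproduces the stated coefficients (e.g.\ it recovers Corollary \ref{cor6} at $k=2$, $r=1$); the paper's Laplace-transform route has the advantage of simultaneously yielding the companion identity \eqref{lem:stir1eq1} without the $\frac{1}{\ell+1}$ factor, which motivates the generalized Eulerian numbers in Section \ref{sec:def}.
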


\begin{cor}\label{cor6}The special case $r=1$ satisfies
\begin{align}
     \left(\frac{d}{dx}\right)^{k+1}x^{k-1}H(x) = \frac{(k-1)!}{x^2} \left(1 - \frac{1}{(1-x)^k}\right).
 \end{align}
\end{cor}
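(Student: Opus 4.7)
The plan is to derive the corollary directly from equation~\eqref{thm3eq1} of Theorem~\ref{thm3}, specialized to $r=1$. In that case the identity reads
\[
\left(\frac{d}{dx}\right)^{k+1}x^{k-1}H(x) = -k!\sum_{\ell=0}^\infty \binom{k+\ell}{k}\frac{x^{\ell-1}}{\ell+1},
\]
so the entire task reduces to summing this power series in closed form.

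My next step is to absorb the factor $(\ell+1)^{-1}$ into the binomial coefficient via the elementary identity $\tfrac{1}{\ell+1}\binom{k+\ell}{k}=\tfrac{1}{k}\binom{k+\ell}{\ell+1}$. This rewrites the prefactor $k!/(\ell+1)$ as $(k-1)!$ and leaves a sum whose coefficients $\binom{k+\ell}{\ell+1}$ match those of the negative binomial series $(1-x)^{-k}=\sum_{m\geq 0}\binom{k+m-1}{m}x^m$ after the reindexing $m=\ell+1$. The missing $m=0$ term accounts for the constant $1$ under the outer factor of $x^{-2}$, producing the stated expression $\tfrac{(k-1)!}{x^2}\bigl(1-(1-x)^{-k}\bigr)$.

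I do not anticipate any real obstacle: once Theorem~\ref{thm3} is in hand the argument is a one-line binomial manipulation, and convergence of the series on $(0,1)$ is automatic. As a sanity check one can verify $k=1,2$ directly against $H'(x) = -\log(x/(1-x))$ and elementary derivatives, and both match. An alternative route would be to split $x^{k-1}H(x)=-x^k\log x-(x^{k-1}-x^k)\log(1-x)$ and combine the classical formula $(d/dx)^{k+1}[x^k\log x]=k!/x$ with Leibniz's rule applied to $(1-x)^{k-1}y\log y$ after the substitution $y=1-x$; this works but is noticeably more bookkeeping-heavy than the generating-function argument above, so I would keep it only as a secondary verification.
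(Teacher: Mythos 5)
Your proposal is correct and is essentially identical to the paper's own proof: it specializes Equation \eqref{thm3eq1} to $r=1$, absorbs $\frac{1}{\ell+1}$ via $\frac{1}{\ell+1}\binom{k+\ell}{k}=\frac{1}{k}\binom{k+\ell}{\ell+1}$, and sums the reindexed negative binomial series, with the shifted $m=0$ term producing the constant $1$. No gaps.
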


\begin{cor}\label{cor7}
The special case $r=k$ has the following additional simplifications in terms of $s$-binomial coefficients defined in Definition \eqref{snomialdef}, where $\omega=e^{\frac{2\pi i}{k}} $ is a primitive $k$-th root of unity:
\begin{align}
 \left(\frac{d}{dx}\right)^{k+1}H(x^k) &= -\frac{k!}{x} \sum_{j=0}^{k-1} \frac{1}{(1-\omega^jx)^k} \\
 &=-\frac{k\cdot k!}{x(1-x^k)^k} \sum_{\ell=0}^{k-1} \binom{k}{\ell k}_{k-1}x^{k\ell} \\
 &= -k\cdot k! \sum_{\ell=0}^\infty \binom{k+k\ell-1}{k-1} x^{k\ell-1} .
no \end{align}
\end{cor}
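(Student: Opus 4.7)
All three identities should follow from Theorem \ref{thm3} at $r=k$ via routine generating-function manipulations, so the plan is to treat the two expansions in that theorem as inputs and unpack the three displayed forms in turn. The infinite-series form is the most immediate: specialize \eqref{thm3eq1} to $r=k$ and absorb the factor $\tfrac{1}{\ell+1}$ using the identity $\binom{k+k\ell}{k}=(\ell+1)\binom{k+k\ell-1}{k-1}$. From this series I would then recover the partial-fraction form by a roots-of-unity filter on $\sum_{j=0}^{k-1}(1-\omega^{j}x)^{-k}$: expanding each summand as $(1-z)^{-k}=\sum_{n\geq 0}\binom{n+k-1}{k-1}z^{n}$ and invoking $\sum_{j=0}^{k-1}\omega^{jn}=k\,\mathbbm{1}[k\mid n]$ collapses the resulting double sum to $k\sum_{\ell\geq 0}\binom{k\ell+k-1}{k-1}x^{k\ell}$, which after restoring the $-k!/x$ prefactor reproduces the infinite series and hence proves the first displayed equation.

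For the $s$-binomial form I would specialize \eqref{mainthmfullstatement} to $r=k$, reducing the claim to the combinatorial identity $h_{k,k,j}=\binom{k}{kj}_{k-1}$. To prove this identity I would compute the generating function $\sum_{j\geq 0}h_{k,k,j}\,y^{j}$: interchanging the order of summation in \eqref{hkrjdef}, using $\sum_{u}(-y)^{u}\binom{k}{u}=(1-y)^{k}$, and applying the same absorption identity yields
\[
\sum_{j\geq 0}h_{k,k,j}\,y^{j} \;=\; (1-y)^{k}\sum_{v\geq 0}\binom{kv+k-1}{k-1}y^{v}.
\]
The residual series is the $k$-section at $y=x^{k}$ of $(1-x)^{-k}$, which via the factorization $(1-x)^{-k}=(1+x+\cdots+x^{k-1})^{k}/(1-x^{k})^{k}$ equals $\frac{1}{(1-y)^{k}}\sum_{\ell=0}^{k-1}\binom{k}{k\ell}_{k-1}y^{\ell}$. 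The $(1-y)^{k}$ factors then cancel, and coefficient comparison in $y$ finishes the identity and therefore the second displayed equation.

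I do not expect a substantive obstacle, since the three expressions are algebraic reorganizations of the same series and the hard analytic work lives in Theorem \ref{thm3}. The main care point is the bookkeeping in the $k$-section step: one must check that $(1+x+\cdots+x^{k-1})^{k}$ has degree exactly $k(k-1)<k^{2}$, so that the polynomial $\sum_{m}\binom{k}{m}_{k-1}x^{m}$ truncates at $m=k(k-1)$ and the finite $s$-binomial sum in the stated corollary runs precisely over $\ell\in\{0,1,\ldots,k-1\}$; this is also what implicitly shows $h_{k,k,j}=0$ for $j\geq k$, consistent with the finite sum appearing in the statement.
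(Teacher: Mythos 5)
Your proposal is correct, and for two of the three identities it coincides with the paper's proof: the infinite-series form via the absorption identity $\binom{k+k\ell}{k}=(\ell+1)\binom{k+k\ell-1}{k-1}$ applied to \eqref{thm3eq1} at $r=k$, and the partial-fraction form via the roots-of-unity filter $\sum_{j=0}^{k-1}\omega^{j n}=k\,\mathbbm{1}[k\mid n]$ applied to the expansion of $(1-\omega^j x)^{-k}$. Where you diverge is the $s$-binomial identity. The paper derives it from the partial-fraction form by the multisection identity $\sum_n a_{kn}z^{kn}=\tfrac1k\sum_j F(\omega^j z)$ applied to $F(x)=\bigl(\tfrac{1-x^k}{1-x}\bigr)^k$, i.e.\ another pass through roots of unity. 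You instead specialize \eqref{mainthmfullstatement} to $r=k$ and prove $h_{k,k,j}=\binom{k}{kj}_{k-1}$ directly: the generating function of $h_{k,k,j}$ factors as $(1-y)^k\sum_v\binom{kv+k-1}{k-1}y^v$, and the $k$-section of $(1-x)^{-k}$ is extracted purely formally from the factorization $(1-x)^{-k}=(1+x+\cdots+x^{k-1})^k/(1-x^k)^k$, using that the denominator only involves powers of $x^k$ --- no roots of unity needed. Your route has the small bonus of reproving $h_{k,k,j}=0$ for $j\geq k$ (the $r=k$ case of Lemma \ref{thm:finite-diff}) as a byproduct, and the degree count $\deg(1+x+\cdots+x^{k-1})^k=k(k-1)<k^2$ correctly pins the finite range $\ell\in\{0,\ldots,k-1\}$; the paper's route is marginally shorter once the partial-fraction form is in hand. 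Both are sound.
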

The scaling $r v$ in the inner binomial coefficient $\binom{rv+k}{k}$ in Equation \eqref{mainthmfullstatement} is what makes the analysis here difficult. One common classical tool to deal with the $rv$ scaling is the Rothe-Hagen identity \cite[Table 202]{Knuth94} and its generalizations, for example due to Gould \cite{Gould61}. However, the Rothe-Hagen identity contains binomials of the form $\binom{rv+k}{v}$, where $k,r$ are parameters and $v$ is the summation index. The Lagrange inversion formula also yields series with binomial coefficients $\binom{rv+k}{v}$, such as the expression for $\frac{1}{1+\alpha_k}$ from Lemma \ref{lem:lagrange}. Instead, we require binomials of the form $\binom{rv+k}{k}$. We could also compute a Fourier expansion by writing the sum over all $v$ instead of $rv$, and then inserting $\frac{1}{r}\sum_{j=0}^{r-1}\omega^{jv}$, where $\omega$ is a primitive $r$-th root of unity. However, this only provides a simplification in the case $k=r$, where it is used in the proof of Corollary \ref{cor7}.

Inequality \eqref{mainineq} also has an information theoretic interpretation. Letting $X_1,\ldots,X_k\sim \Ber(x)$ be Bernoulli distributed bits and $A_j:=\wedge_{i=1}^j X_i$ denote the binary AND of the first $j$ bits, we have $H(x^k) = H(A_k)$ and $ H(A_k|A_{k-1}) = x^{k-1}H(x) $ the conditional entropy of the $k$-th bit. This gives a strong data processing inequality comparing the entropy of the AND of $k$ bits to the entropy of the AND of $k$-th bit conditioned on the AND of the previous $k-1$ bits.

In Section \ref{sec:def} we introduce the generalized Stirling numbers of Hsu and Shiue, and demonstrate a connection to generalized Bernoulli and Eulerian numbers. In Section \ref{sec:findiff} we evaluate $h_{k,r,j}$ in certain regimes of $k,r,j$. In Section \ref{sec3} we prove the entropy expansions of Theorem \ref{thm3}. In Section \ref{sec4} we prove Corollaries \ref{cor6} and \ref{cor7}, which are the $r=1$ and $r=k$ cases of Theorem \ref{thm3}. In Section \ref{sec:reduction} we prove Theorem \ref{thm:reduction}, an equivalence between our main entropy inequality and counting real roots of $h_{k,r}(x)$ in $(0,1)$. Finally, in Section \ref{sec:alpha} we study the scaling constant $\alpha_k$.

The proof suggests a more general framework for proving tight inequalities for logarithms of polynomials of the form
$$ f(x):=\sum_{i} p_i(x)\log(1-q_i(x)) \geq 0,$$
where both $p_i(x)$ and $q_i(x)$ are polynomials in $x$. Such functions arise as free energies in problems in statistical mechanics or in constraint satisfaction problems, such as in the study of graph $k$-coloring \cite[Equation (8)]{Coja2013} or boolean $k$-SAT \cite[Equation (3)]{Mertens2006}. First, we manually find the roots of $f(x)$. Next, we pass to the $(n+1)$-st derivative, where $n$ is the maximum degree $\max_i (\deg  p_i(x)+\deg q_i(x))$. Taking this number of derivatives leads to a rational function. We then use classical methods to identify the number of roots of the $(n+1)$-st derivative, which is a rational function and more tractable than the original logarithmic $f(x)$. We then appeal to Rolle's theorem in the form that a function can have at most one more root than its derivative on a given interval. We use Rolle's theorem to pass from the $(n+1)$-st derivative to the original function, which can have at most $(n+1)$ more roots than the $(n+1)$-st derivative. If we are lucky, then in the first step we identified all roots of $f(x)$. Finally, we show that $f(x)$ takes positive values between each root, which implies that it is positive everywhere. The innovation over previous methods is that if we can control the \textit{multiplicities} of the roots of $f(x)$ and the \textit{multiplicities} of the roots of the $(n+1)$-st derivative, which is now a polynomial, Rolle's theorem allows us to deduce the desired inequality.

\section{Definitions}\label{sec:def}

The ubiquitous \textbf{Stirling numbers of the second kind} are defined as the solutions to the recurrence \cite[Equation (6.3)]{Knuth94} and have the closed form \cite[Equation (6.19)]{Knuth94}:
\begin{align}
     S(n+1,\ell) &= S(n,\ell-1)+\ell S(n,\ell)\label{stirdef5}, \\
     \ell!S(n,\ell) &= \sum_{v=0}^\ell(-1)^{\ell-v}\binom{\ell}{v} v^n . \label{stirdef6}
\end{align}

Define the \textbf{scaled Pochhammer} symbol $(z|\alpha)_n:= z(z-\alpha)\cdots (z-(n-1)\alpha), n\geq 1$. Hsu and Shiue \cite{Hsu88} introduced generalized Stirling numbers using these scaled Pochhammer symbols. Let $\alpha,\beta,\gamma\in \mathbb{R}$. Define \textbf{generalized Stirling numbers} $S(n,\ell|\alpha,\beta,\gamma)$ via the change of basis relation
\begin{equation}
    (z|\alpha)_n = \sum_{\ell=0}^n S(n,\ell|\alpha,\beta,\gamma) (z-\gamma|\beta)_n
\end{equation}
and initial conditions $S(0,0|\alpha,\beta,\gamma) = 1, S(n,0|\alpha,\beta,\gamma) = (\gamma|\alpha)_n$.

Many properties of these, discovered by subsequent researchers, are surveyed in the book \cite{Mansour16}. We will require the following recurrence and closed form \cite[Theorem (4.51), Theorem (4.52)]{Mansour16}:
\begin{align}
    S(n+1,\ell|\alpha,\beta,\gamma) &= S(n,\ell-1|\alpha,\beta,\gamma) + (\ell\beta-n\alpha+\gamma)S(n,\ell|\alpha,\beta,\gamma),\label{stirdef2} \\
    S(n,\ell|\alpha,\beta,\gamma) &= \frac{(-1)^\ell}{\beta^\ell \ell!} \sum_{j=0}^\ell(-1)^j \binom{\ell}{j}(\beta j +\gamma|\alpha)_n. \label{stirdef3}
\end{align}
In the remainder of this paper, we often take $\alpha=1$ in the definition of the generalized Stirling numbers, which gives
\begin{equation}\label{stirdef7}
    \ell! S(n,\ell|1,\beta,\gamma) \beta^\ell = \sum_{v=0}^\ell (-1)^{\ell-v} \binom{\ell}{v}n!\binom{\beta v+\gamma}{n}.
\end{equation}

We also require the Dobi\'nski type formula \cite[Equation (27)]{Hsu88} which factors $e^x$ out of the series:
\begin{equation}\label{stirdef4}
    \sum_{\ell=0}^\infty \frac{x^\ell}{\ell!} \binom{r\ell+s}{n} =  \frac{1}{n!} \sum_{\ell=0}^n S(n,\ell|1,r,s)r^\ell e^xx^\ell.
\end{equation}

Note that by comparing recurrences and initial conditions, we can show that 
$$ C(k,t,j) = \frac{k^j}{(k-1)!}S(t,j|1,k,0),$$
where $C(k,t,j)$ are the rational coefficients introduced by Yuster \cite{Yuster23} in his work on the approximate $k$-union closed sets conjecture, which considers the $r=1,k$ cases of our result. Furthermore, the $C(k,t,j)$ coefficients have been studied before and are exactly the generalized factorial coefficients of \cite[Definition 8.2]{Charalambides02}, since they satisfy the same recurrence and initial conditions.

\subsection{Classical analogy}
Comparing the two closed form expressions for Stirling numbers \eqref{stirdef6} and \eqref{stirdef7} shows that, up to normalization by $\beta$, we replace the term $v^n$ with $$n!\binom{\beta v+\gamma}{n} =(\beta v +\gamma)(\beta v +\gamma-1)\cdots (\beta v +\gamma-n+1).$$ 

If we further specialize $\gamma=0$, we can take the limit
\begin{align}
\lim_{\beta\to \infty} \frac{S(n,\ell|1,\beta,0)\beta^\ell}{\beta^n} &= \lim_{\beta\to \infty} \frac{1}{\ell! \beta^n} \sum_{v=0}^\ell (-1)^{\ell-v} \binom{\ell}{v}n!\binom{\beta v}{n} \\
&= \frac{1}{\ell!} \sum_{v=0}^\ell (-1)^{\ell-v} \binom{\ell}{v}v^n \\
&= S(n,\ell).
\end{align}

The \textbf{Eulerian numbers} $A_{n,k}$ have the closed form \cite[Equation (6.38)]{Knuth94}
\begin{align}
    A_{n,\ell} &= \sum_{v=0}^{\ell}(-1)^{\ell-v}\binom{n+1}{\ell-v}(v+1)^n.  \label{eulerdef2}
\end{align} We introduce the companion sequence of \textbf{generalized Eulerian numbers}
\begin{equation}\label{eulerdef3}
    A_{n,\ell}^{(r,s)} = n!\sum_{v=0}^{\ell}(-1)^{\ell-v} \binom{n+1}{\ell-v} \binom{(v+1)r+s}{n}.
\end{equation}
By a similar argument to the Stirling case we see that $$ \lim_{r\to \infty}  \frac{A_{n,\ell}^{(r,0)}}{r^n} = A_{n,\ell}.$$ Note that if $\gamma\neq 0$ or $s\neq 0$ we do not reduce to standard Eulerian and Stirling numbers in the large $r$ limit. 

There are many classical relations linking Stirling numbers, sum of powers, Eulerian numbers, and Bernoulli numbers. One of the key identities linking moments, Stirling numbers, and Eulerian numbers \cite[Equation (7.46)]{Knuth94} is
\begin{equation}
\left(z \frac{d}{dz}\right)^n \frac{1}{1-z} = \sum_{\ell=1}^\infty \ell^n z^\ell= \sum_{j=0}^n j!S(n,j) \frac{z^j}{(1-z)^{j+1}} = \frac{z}{(1-z)^{n+1}} \sum_{j=0}^n A_{n,j}z^j.   
\end{equation}
Combining Lemma \ref{lem:stir1eq1} with some further calculations gives a generalization of this transformation involving the parameters $r,s$:
\begin{align}
 n!\sum_{\ell=1}^\infty \binom{r\ell+s}{n} z^\ell = \sum_{j=0}^n j!S(n,j|1,r,s)r^j \frac{z^j}{(1-z)^{j+1}} = \frac{z}{(1-z)^{n+1}}\sum_{j=0}^n A_{n,j}^{(r,s)}z^j .
\end{align}
The existence of this transformation is what leads to the definition of $A_{n,j}^{(r,s)}$. These definitions of generalized Stirling and Eulerian numbers suggest an analog of the classical calculus where we systematically replace powers $v^t$ with the scaled Pochhammer $t!\binom{\beta v}{t} = (\beta  v)(\beta  v-1)\cdots (\beta v-t+1)$. This introduces the free parameter $\beta$, and we can recover all of the classical sequences by normalizing and taking the $\beta \to \infty$ limit.

Beginning with the closed form for Bernoulli numbers \cite[Equation (24.6.9)]{DLMF}
\begin{equation}
    B_n = \sum_{\ell=0}^n \sum_{v=0}^\ell \frac{(-1)^v}{\ell+1} \binom{\ell}{v} {v^n},
\end{equation}
we then define \textbf{generalized Bernoulli numbers} by the closed form
\begin{equation}
        B_n^{(r,s)}:=n!\sum_{\ell=0}^n \sum_{v=0}^\ell \frac{(-1)^v}{\ell+1} \binom{\ell}{v} \binom{r v+s}{n}.
\end{equation}
We leave it as an open question to explore the links between these generalizations of Bernoulli, Eulerian, and Stirling numbers.

\section{Finite differences}\label{sec:findiff}

We will need to understand $h_{k,r,j}$ in more detail, in particular its vanishing for $j\geq k \geq r \geq 1$. Using finite difference operators, we can provide an expression for $h_{k,r,j}$ which sums over $k-j$ terms instead of $j$ terms. This gives explicit expressions for the leading coefficients of $h_{k,r}(x)$, since for example we have $h_{k,r,k-1} = (-1)^r \frac{1}{\binom{k}{r}} $ and $(-1)^k h_{k,r,k-2} = (-1)^{r+1}\frac{k}{\binom{k}{r}}+\binom{k-2r}{k} $. Note the appearance of binomial coefficients with a negative upper index, defined as $\binom{-n}{k}=(-1)^k\binom{n+k-1}{k}$ for $k\geq 0$ and any integer $n$ \cite[Equation (1.2.6)]{DLMF}.
\begin{lem}\label{thm:finite-diff} 
Consider integer $k\geq r \geq 1$. If $j\geq k$, then $h_{k,r,j} = 0$. If $1\leq j<k$, then
\begin{equation}
    h_{k,r,j}=(-1)^{j+r+1} \frac{\binom{k}{j+1}}{\binom{k}{r}} + \sum_{v=2}^{k-j} \frac{(-1)^{j+v}}{v-1} \binom{k}{j+v}\binom{k-rv}{k}.
\end{equation}
\end{lem}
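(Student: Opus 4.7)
The plan is to factor $(v+1)$ out of $\binom{rv+k}{k}$ so that the $\frac{1}{v+1}$ in the definition of $h_{k,r,j}$ disappears and the sum reduces to a pure alternating binomial transform of a polynomial. Write $P(v):=\binom{rv+k}{k}$, a degree-$k$ polynomial in $v$. Because $P(-1)=\binom{k-r}{k}=0$ for $1\leq r\leq k$ (the product $(k-r)(k-r-1)\cdots(1-r)$ hits zero), we may write $P(v)=(v+1)Q(v)$ with $Q$ of degree $k-1$. Substituting and reindexing $u=j-v$ yields
$$ h_{k,r,j} = \sum_{u=0}^{j}(-1)^u\binom{k}{u}Q(j-u). $$

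For $j\geq k$, the sum effectively runs over $u=0,1,\ldots,k$ since $\binom{k}{u}=0$ for $u>k$, and equals $(-1)^k$ times the $k$-th finite difference $\Delta^k[Q(j-\cdot)](0)$; since $u\mapsto Q(j-u)$ is a polynomial of degree $k-1$, this vanishes and $h_{k,r,j}=0$. For $1\leq j<k$, the same identity applied to the completed sum $\sum_{u=0}^{k}$ gives zero, so I can isolate the missing tail:
$$ h_{k,r,j} = -\sum_{u=j+1}^{k}(-1)^u\binom{k}{u}Q(j-u) = \sum_{w=1}^{k-j}(-1)^{j+w+1}\binom{k}{j+w}Q(-w), $$
after setting $w=u-j$. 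For $w\geq 2$ direct substitution gives $Q(-w)=P(-w)/(1-w)=-\binom{k-rw}{k}/(w-1)$; the term $w=1$ is a $0/0$ and equals $P'(-1)$. I would compute this derivative from $P(v)=\frac{1}{k!}\prod_{i=0}^{k-1}(rv+k-i)$: only the factor with $i=k-r$ vanishes at $v=-1$, so the product rule leaves the other factors, which contribute $(k-r)!$ (from $i<k-r$) and $(-1)^{r-1}(r-1)!$ (from $i>k-r$), giving $Q(-1)=(-1)^{r-1}/\binom{k}{r}$.

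The conceptual lever is the factorization $P(v)=(v+1)Q(v)$, which hinges on $\binom{k-r}{k}=0$ for $1\leq r\leq k$; without it the $\frac{1}{v+1}$ would prevent collapse. The main obstacle, mild as it is, is the limiting evaluation of $Q(-1)$, since it is the one point in the argument that requires a derivative rather than a substitution. Assembling the $w=1$ boundary contribution and the $w\geq 2$ tail then reproduces the two pieces $(-1)^{j+r+1}\binom{k}{j+1}/\binom{k}{r}$ and $\sum_{v=2}^{k-j}\frac{(-1)^{j+v}}{v-1}\binom{k}{j+v}\binom{k-rv}{k}$ in the statement.
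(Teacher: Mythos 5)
Your proof is correct and rests on the same key facts as the paper's: that $\binom{k-r}{k}=0$ forces $\frac{1}{v+1}\binom{rv+k}{k}$ to be a polynomial $Q$ of degree $k-1$, that the $k$-th finite difference annihilates it, and that the one delicate evaluation $Q(-1)=P'(-1)=(-1)^{r-1}/\binom{k}{r}$ supplies the boundary term. The only difference is organizational — you complete the sum to $u=0,\dots,k$ and isolate the tail, whereas the paper evaluates $\Delta^k q$ at $x=-j$ via a limit and then reindexes $j\mapsto k-j$ — and your bookkeeping is arguably the more direct of the two.
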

\begin{proof}
    Define the polynomial $q(x) =\frac{1}{x+1}\binom{rx+k}{k}$. For $r\leq k$ the numerator contains the factor $rx+r$ which cancels with $x+1$ in the denominator, so that $q(x)$ is a polynomial of degree $\leq k-1$. We apply the finite difference operator $\Delta$ defined by $\Delta q(x) := q(x+1)-q(x)$, which lowers the degree of a polynomial $q(x)$ by $1$. Therefore, we have the key identity
    \begin{equation}\label{eq:finite-diff}
        \Delta^k q(x) = \sum_{v=0}^k (-1)^{k-v} \binom{k}{v} q(x+v) =  \sum_{v=0}^k (-1)^{k-v} \binom{k}{v} \frac{1}{x+v+1} \binom{rx+rv+k}{k} = 0,
    \end{equation}
    which is $0$ because we have lowered the degree of a degree $k-1$ polynomial $k$ times. 

    Consider the case $j\geq k\geq r\geq 1$. We rewrite the definition of $h_{k,r,j}$ as 
    \begin{align*}
        h_{k,r,j} &= \sum_{v=0}^j  \frac{(-1)^{j-v}}{v+1} \binom{rv+k}{k}  \binom{k}{j-v} \\
         &= \sum_{v=j-k}^j  \frac{(-1)^{j-v}}{v+1} \binom{rv+k}{k}  \binom{k}{j-v} \\
         &=  \sum_{v=0}^k  \frac{(-1)^{k-v}}{j-k+v+1} \binom{rv+k + r(j-k)}{k}  \binom{k}{k-v} \\
        &=  \sum_{v=0}^k  \frac{(-1)^{k-v}}{j-k+v+1} \binom{rv+k + r(j-k)}{k}  \binom{k}{v}.
    \end{align*}
    We first truncated the sum from $v=j-k\geq 0$ to $v=k$ since $\binom{k}{j-v}$ vanishes outside this range. We then shifted the $v$ sum by $j-k$ and used the symmetry $\binom{k}{k-v} = \binom{k}{v}$. Comparing against Equation \eqref{eq:finite-diff}, we see that this is exactly $\Delta^k q(x)\vert_{x=j-k} =0$, since setting $x=j-k\geq 0$ does not lead to any singular terms. 
    
    Now consider the case $k\geq r\geq 1$ and $k>j\geq 1$. Consider the limit $x\to -j$ in Equation \eqref{eq:finite-diff}, so that the only singular term is at $v = j-1$, where we have
    \begin{align*}
        \lim_{x\to -j}(-1)^{k-j-1} &\binom{k}{j-1} \frac{1}{x+j} \binom{rx+rj+k-r}{k} = (-1)^{k-j-1} \binom{k}{j-1}\lim_{x\to 0} \frac{1}{x} \binom{rx+k-r}{k} \\
        &= (-1)^{k-j-1} \binom{k}{j-1}\frac{r}{k!} \lim_{x\to 0} \frac{(rx+k-r)(rx+k-r-1)\cdots (rx-r+1)}{rx} \\
        &= (-1)^{k-j-1} \binom{k}{j-1} \frac{r}{k!}\cdot (k-r)! (r-1)! (-1)^{r-1} \\
        &= (-1)^{k-j-r} \frac{\binom{k}{j-1}}{\binom{k}{r}}.
    \end{align*}
    The $rx$ terms in the numerator and denominator cancelled, so that substituting $x=0$ was a well defined operation. 
    Hence the finite difference result \eqref{eq:finite-diff} reduces to
    \begin{align}
        0=\Delta^k q(x)\big\vert_{x=-j} &= (-1)^{k-j-r} \frac{\binom{k}{j-1}}{\binom{k}{r}} + \sum_{v=0}^{j-2} \frac{(-1)^{k-v}}{v-j+1} \binom{k}{v}  \binom{rv-rj+k}{k} +  \sum_{v=j}^{k} \frac{(-1)^{k-v}}{v-j+1} \binom{k}{v}  \binom{rv-rj+k}{k} \nonumber \\
        &= (-1)^{k-j-r} \frac{\binom{k}{j-1}}{\binom{k}{r}} + h_{k,r,k-j} + \sum_{v=0}^{j-2} \frac{(-1)^{k-v}}{v-j+1}  \binom{k}{v} \binom{rv-rj+k}{k}. \label{reciprocity1}
    \end{align}
    
    Here, we rewrote $h_{k,r,k-j}$ as
    \begin{align*}
        h_{k,r,k-j}:&=\sum_{v=0}^{k-j}  \frac{(-1)^{k-j-v}}{v+1} \binom{rv+k}{k}  \binom{k}{k-j-v} \\
        &=\sum_{v=j}^k  \frac{(-1)^{k-v}}{v-j+1} \binom{rv-rj+k}{k}  \binom{k}{k-v} \\
        &=\sum_{v=j}^k  \frac{(-1)^{k-v}}{v-j+1} \binom{rv-rj+k}{k}  \binom{k}{v},
    \end{align*}
    where we shifted the summation index $v$ by $j$, reversed the order of summation, and used the symmetry $\binom{k}{k-v} = \binom{k}{v}$.

    We also have
    \begin{align*}
        \sum_{v=0}^{j-2}\frac{ (-1)^{k-v}}{v-j+1} \binom{k}{v}  \binom{rv-rj+k}{k} 
        &= \sum_{v=0}^{j-2} \frac{(-1)^{k+j-v} }{-(v+1)} \binom{k}{j-v-2}  \binom{-rv-2r+k}{k}  \\
        &= \sum_{v=2}^{j} \frac{(-1)^{k+j-v+1}}{v-1} \binom{k}{j-v}  \binom{k-rv}{k},
    \end{align*}
    where we reversed the order of summation and then shifted the summation over $v$ by $2$. Finally, reversing $j \mapsto k-j$ in Equation \eqref{reciprocity1} and noting $\binom{k}{k-j-v} = \binom{k}{j+v}$ completes the proof.
\end{proof}

\section{Entropy derivative closed forms}\label{sec3}
This section proves the closed forms for iterated entropy derivatives from Theorem \ref{thm3}. We begin with a fundamental infinite series expansion for the binary entropy. The key is that we consider $x^k\log x$ as an analytic function around $0$, which we do not series expand, while we series expand $\log(1-x^k)$.

\begin{lem}\label{lem:fund}
    For integer $k \geq 1$ and real $0\leq x\leq 1$ we have the expansion 
\begin{equation}
    H(x^k) = -x^k\log x^k -(1-x^k)\log(1-x^k) = -kx^k\log x + x^k -\sum_{\ell=1}^\infty \frac{x^{k(\ell+1)}}{\ell(\ell+1)}.
\end{equation}
\end{lem}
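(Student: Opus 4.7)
The plan is to split $H(x^k)$ into its two natural pieces and treat them differently: keep $-x^k\log(x^k)=-kx^k\log x$ in closed form, and expand the other piece as a power series. Concretely, for $0\le x<1$ I would start from the standard Taylor expansion
\[
\log(1-x^k)=-\sum_{m=1}^{\infty}\frac{x^{km}}{m},
\]
multiply by $-(1-x^k)$, and distribute to obtain two series:
\[
-(1-x^k)\log(1-x^k)=\sum_{m=1}^{\infty}\frac{x^{km}}{m}-\sum_{m=1}^{\infty}\frac{x^{k(m+1)}}{m}.
\]

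The next step is a reindexing that makes a telescoping-like cancellation visible. Pull out the $m=1$ term $x^k$ of the first sum, and in the remaining first sum together with the second sum write both in terms of a common exponent $k\ell$. Using $\frac{1}{\ell}-\frac{1}{\ell-1}=-\frac{1}{\ell(\ell-1)}$ (the partial fraction identity that produces the $\ell(\ell+1)$ denominator after shifting $\ell\mapsto \ell+1$), the combined tail simplifies to $-\sum_{\ell=1}^{\infty}\frac{x^{k(\ell+1)}}{\ell(\ell+1)}$. Adding back the $-kx^k\log x$ piece from the first half of $H(x^k)$ yields the claimed formula.

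Finally I would justify the identity on the closed interval $0\le x\le 1$. For $0\le x<1$ the Taylor expansion of $\log(1-x^k)$ converges absolutely, so all manipulations are valid. At $x=0$ both sides are $0$ (using $x^k\log x\to 0$ for $k\ge 1$). At $x=1$ the series $\sum_{\ell=1}^{\infty}\frac{1}{\ell(\ell+1)}$ telescopes to $1$, giving right-hand side equal to $0+1-1=0=H(1)$; this can be upgraded to equality on the full interval by Abel's theorem, or simply by continuity of both sides at $x=1$ together with uniform convergence on $[0,1-\delta]$.

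There is no real obstacle here: the only thing to watch is the bookkeeping in the index shift $m\mapsto \ell-1$ so that the surviving $x^k$ term from the first sum is correctly separated before the partial-fraction combination, and the mild care needed at the endpoint $x=1$ where the logarithmic series is borderline.
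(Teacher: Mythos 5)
Your proposal is correct and follows essentially the same route as the paper: both expand $\log(1-x^k)$ as a power series and use the partial fraction $\frac{1}{\ell}-\frac{1}{\ell+1}=\frac{1}{\ell(\ell+1)}$ to produce the telescoped series, differing only in direction (the paper first verifies $\sum_{\ell\ge 1}\frac{x^{\ell+1}}{\ell(\ell+1)}=x+(1-x)\log(1-x)$ and then substitutes $x\mapsto x^k$, while you expand $-(1-x^k)\log(1-x^k)$ directly). The endpoint checks at $x=0$ and $x=1$ match the paper's as well.
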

\begin{proof}
Recall that $-\log(1-x) = \sum_{\ell=1}^\infty \frac{x^\ell}{\ell}$ for $0<x<1$.
Consider the series 
\begin{align*}
    \sum_{\ell=1}^\infty \frac{x^{\ell+1}}{\ell(\ell+1)} &= \sum_{\ell=1}^\infty {x^{\ell+1}} \left(\frac{1}{\ell}-\frac{1}{\ell+1}\right) \\
    &= x \sum_{\ell=1}^\infty \frac{x^\ell}{\ell} -  \sum_{\ell=2}^\infty \frac{x^\ell}{\ell} \\
    &= -x \log(1-x) + (x+\log(1-x)) \\
    &= x + (1-x)\log(1-x).
\end{align*}
Mapping $x\mapsto x^k$ and substituting into the definition of $H(x^k) = -x^k\log x^k - (1-x^k)\log(1-x^k)$ finishes the proof.

Note that at $x=0$ this approaches $H(0)=0$ and at $x=1$ we can telescope $$\sum_{\ell=1}^\infty \frac{1}{\ell(\ell+1)} = \sum_{\ell=1}^\infty  \left(\frac{1}{\ell}-\frac{1}{\ell+1}\right) = 1,$$
so that the series converges for $0\leq x\leq 1$.
\end{proof}

We now differentiate termwise to obtain an expression for the $(k+1)$-st derivative.
\begin{lem}\label{lem8}
        For integer $k \geq r \geq 1$ and real $0< x< 1$ we have \begin{align}
 \left(\frac{d}{dx}\right)^{k+1}x^{k-r}H(x^r) &= -r\cdot k! \sum_{\ell=0}^\infty \binom{k+r\ell}{k} \frac{1}{\ell+1} x^{r\ell-1}.
\end{align}
\end{lem}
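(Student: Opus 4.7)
My plan is to invoke Lemma \ref{lem:fund} to replace $H(x^r)$ by an explicit series, multiply by $x^{k-r}$, and then differentiate termwise $(k+1)$ times. Specifically, applying Lemma \ref{lem:fund} with $k$ replaced by $r$ yields
\begin{equation*}
x^{k-r} H(x^r) = -r x^{k} \log x + x^{k} - \sum_{\ell=1}^\infty \frac{x^{k+r\ell}}{\ell(\ell+1)},
\end{equation*}
so the problem splits into three pieces whose $(k+1)$-st derivatives I would compute separately.

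For the polynomial piece $x^k$, the $(k+1)$-st derivative is $0$. For the logarithmic piece, I would use the standard identity $\frac{d^{k+1}}{dx^{k+1}}(x^k\log x) = \frac{k!}{x}$ (easy by induction, or by writing $x^k\log x = \frac{d}{d\alpha}x^{k+\alpha}\big|_{\alpha=0}$ and swapping derivatives), giving a contribution of $-r \cdot k! \cdot x^{-1}$. For each term of the series, the monomial $x^{k+r\ell}$ differentiates to
\begin{equation*}
\frac{(k+r\ell)!}{(r\ell-1)!}\,x^{r\ell-1} = r\ell \cdot k! \binom{k+r\ell}{k}\, x^{r\ell-1},
\end{equation*}
where the factor $r\ell$ cancels the $\ell$ in the denominator $\ell(\ell+1)$, leaving $-r\cdot k! \sum_{\ell=1}^\infty \binom{k+r\ell}{k}\frac{1}{\ell+1}x^{r\ell-1}$. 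The logarithmic contribution $-r \cdot k! \cdot x^{-1}$ is exactly the missing $\ell=0$ term (since $\binom{k}{k}/(0+1) = 1$), so absorbing it into the series produces the claimed expression.

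The only real subtlety is justifying the termwise differentiation. I would note that for any compact subinterval $[a,b] \subset (0,1)$ the differentiated series converges uniformly, since $\binom{k+r\ell}{k}$ grows only polynomially in $\ell$ while $x^{r\ell}$ decays geometrically; standard uniform convergence theorems then let one exchange $\frac{d^{k+1}}{dx^{k+1}}$ with $\sum_{\ell=1}^\infty$. This is the main technical obstacle, but it is routine. Everything else is bookkeeping on the binomial and factorial identities.
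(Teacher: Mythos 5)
Your proposal is correct and follows essentially the same route as the paper: both start from Lemma \ref{lem:fund} in the form $x^{k-r}H(x^r) = -rx^k\log x + x^k - \sum_{\ell\geq 1} x^{k+r\ell}/(\ell(\ell+1))$, differentiate termwise using $\frac{d^{k+1}}{dx^{k+1}}x^k\log x = k!/x$, and absorb the logarithmic contribution as the $\ell=0$ term after the same factorial cancellation $\frac{(k+r\ell)!}{(r\ell-1)!\,\ell(\ell+1)} = r\cdot k!\binom{k+r\ell}{k}\frac{1}{\ell+1}$. Your explicit remark on uniform convergence to justify termwise differentiation is a point the paper leaves implicit, but the argument is the same.
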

\begin{proof}
We begin with Lemma \ref{lem:fund} in the form 
\begin{equation}\label{lem:fund2}
x^{k-r}H(x^r) = -rx^k\log x+x^k - \sum_{\ell=1}^\infty \frac{x^{k+r\ell}}{\ell(\ell+1)}.
\end{equation}
Note that $$\left(\frac{d}{dx}\right)^{k+1}x^k\log x = \frac{k!}{x}$$ and $$\left(\frac{d}{dx}\right)^{k+1} x^{k+r\ell} = \frac{(k+r\ell)!}{(r\ell-1)!}x^{r\ell-1},$$
so that after differentiating $(k+1)$ times termwise we have
\begin{align}
    \left(\frac{d}{dx}\right)^{k+1}x^{k-r}H(x^r) = -\frac{r\cdot k!}{x} - \sum_{\ell=1}^\infty \frac{(k+r\ell)!}{(r\ell-1)!(\ell)(\ell+1)}x^{r\ell-1}. \label{gencase1}
\end{align}
Rewrite the factorials as 
$$ \frac{(k+r\ell)!}{(r\ell-1)!(\ell)(\ell+1)} = r\cdot k!\frac{(k+r\ell)!}{(r\ell-1)!k!(r\ell)(\ell+1)} = r\cdot k! \binom{k+r\ell}{k} \frac{1}{\ell+1},$$
and then recognize $-\frac{r\cdot k!}{x} $ as the $\ell=0$ term of the sum. The key observation is that the factorial ratio cancels nontrivially. Finally, the sum in Equation \eqref{gencase1} becomes 
\begin{align*}
 \left(\frac{d}{dx}\right)^{k+1}x^{k-r}H(x^r) &= -r\cdot k! \sum_{\ell=0}^\infty \binom{k+r\ell}{k} \frac{1}{\ell+1} x^{r\ell-1}
\end{align*}
and we are done.
\end{proof}


To show that the $(k+1)$-st derivative is a rational function in $x$, we consider the product with $(1-x^r)^k$ and show that this is a polynomial, which is not an obvious result. 
\begin{lem}\label{lemterminating}
    For integer $k \geq 1$ and real $0< x< 1$ we have 
    $$ \left(\frac{d}{dx}\right)^{k+1}x^{k-r}H(x^r) = -\frac{r\cdot k!}{x(1-x^r)^{k}} \sum_{j=0}^{k-1} x^{rj}\sum_{v=0}^j\frac{(-1)^{j-v}}{v+1}\binom{rv+k}{k} \binom{k}{j-v}.$$
\end{lem}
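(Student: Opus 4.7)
The plan is to start from the infinite series expression of Lemma \ref{lem8}, multiply through by $(1-x^r)^k$, and use the binomial theorem to rewrite the product as a single series in $x^r$ whose coefficients are exactly $h_{k,r,j}$. The vanishing established in Lemma \ref{thm:finite-diff} will then truncate the series at $j=k-1$.

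Concretely, let $y=x^r$. By Lemma \ref{lem8} it suffices to prove
\begin{equation*}
(1-y)^k \sum_{\ell=0}^\infty \binom{k+r\ell}{k}\frac{y^\ell}{\ell+1}
= \sum_{j=0}^{k-1} y^j\sum_{v=0}^j \frac{(-1)^{j-v}}{v+1}\binom{rv+k}{k}\binom{k}{j-v}.
\end{equation*}
Expand $(1-y)^k=\sum_{m=0}^k(-1)^m\binom{k}{m}y^m$, multiply into the series, and re-index by $j=\ell+m$ (so $m=j-\ell$). Writing $v=\ell$ for the surviving summation variable, the left-hand side becomes
\begin{equation*}
\sum_{j=0}^\infty y^j \sum_{v=\max(0,j-k)}^{j} \frac{(-1)^{j-v}}{v+1}\binom{rv+k}{k}\binom{k}{j-v}.
\end{equation*}
Since $\binom{k}{j-v}$ vanishes unless $0\le j-v\le k$, the lower limit on $v$ can be freely replaced by $0$, so the inner sum is exactly $h_{k,r,j}$.

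The main step is then to invoke Lemma \ref{thm:finite-diff}, which states that $h_{k,r,j}=0$ for all $j\ge k$. This kills every term of the outer sum with $j\ge k$, leaving precisely the finite polynomial appearing on the right-hand side. Substituting $y=x^r$ and multiplying by the prefactor $-r\cdot k!/x$ from Lemma \ref{lem8} produces the claimed formula.

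The one point that needs a brief justification is the interchange of summation implicit in the re-indexing step. Since we are working on a compact subinterval $[0,x_0]\subset(0,1)$ where $|y|<1$, the series $\sum_\ell \binom{k+r\ell}{k}\frac{y^\ell}{\ell+1}$ converges absolutely (the coefficients grow only polynomially in $\ell$), and multiplication by the finite polynomial $(1-y)^k$ preserves absolute convergence; Fubini then permits the rearrangement. The only non-routine ingredient in the whole argument is the identity $h_{k,r,j}=0$ for $j\ge k$, but that has already been proved as Lemma \ref{thm:finite-diff} via the finite difference operator applied to $q(x)=\frac{1}{x+1}\binom{rx+k}{k}$, so no further work is needed here.
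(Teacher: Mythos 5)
Your proposal is correct and follows essentially the same route as the paper: multiply the infinite series from Lemma \ref{lem8} by $(1-x^r)^k$, expand via the binomial theorem, reindex to recognize the coefficients as $h_{k,r,j}$, and invoke the vanishing $h_{k,r,j}=0$ for $j\geq k$ from Lemma \ref{thm:finite-diff} to truncate the series. Your added remark on absolute convergence justifying the rearrangement is a small bonus the paper leaves implicit.
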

\begin{proof}
    For $0<  x < 1$, where the entropy series converges, reindex the product
\begin{align*}
    (1-x^r)^k\sum_{\ell=0}^\infty \binom{k+r\ell}{k} \frac{1}{\ell+1} x^{r\ell} &= \sum_{m=0}^k (-1)^m \binom{k}{m}x^{rm} \sum_{\ell=0}^\infty \binom{k+r\ell}{k} \frac{1}{\ell+1} x^{r\ell} \\
    &= \sum_{j=0}^\infty x^{rj} \sum_{v=0}^j \frac{(-1)^{j-v}}{v+1} \binom{k+rv}{k} \binom{k}{j-v} \\
    &= \sum_{j=0}^\infty x^{rj} h_{k,r,j}
\end{align*}
where we recall the definition of $h_{k,r,j}$ in Equation \eqref{hkrjdef}. Now Lemma \ref{thm:finite-diff} says that for $k\geq r\geq 1$ and $j\geq k$, we have $h_{k,r,j}=0$. Therefore this sum is actually a polynomial, and
$$\sum_{\ell=0}^\infty \binom{k+r\ell}{k} \frac{1}{\ell+1} x^{r\ell} = \frac{1}{(1-x^r)^k} \sum_{j=0}^{k-1} x^{rj} h_{k,r,j}.$$
Comparing with Lemma \ref{lem8} completes the proof.
\end{proof}


\begin{lem}\label{lem:stir1}
Let $r,n\geq 1$ be integers and $r\leq s\leq n+r-1$ an integer. For complex $w$ with $\Re(w)<1$ we have
\begin{equation}\label{lem:stir1eq1}
    \sum_{\ell=0}^\infty w^{r\ell-1} \binom{r\ell+s}{n} = \frac{1}{n!} \sum_{\ell=0}^n \ell! S(n,\ell|1,r,s)r^\ell \frac{w^{r\ell-1}}{(1-w^r)^{\ell+1}}
\end{equation}
and
\begin{equation}\label{lem:stir1eq2}
  \sum_{\ell=0}^\infty \frac{w^{r\ell-1}}{\ell+1} \binom{r\ell+s}{n} =  \frac{1}{n!} \sum_{\ell=0}^n \ell!S(n,\ell+1|1,r,s-r)r^{\ell+1} \frac{w^{r\ell-1}}{(1-w^r)^{\ell+1}}.
\end{equation}
\end{lem}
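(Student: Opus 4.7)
My plan is to reduce both identities to a single change of basis formula for the generalized Stirling numbers. Specializing the defining relation $(z|1)_n = \sum_\ell S(n,\ell|1,r,s)(z-s|r)_\ell$ at $z=rv+s$ and using $(rv+s|1)_n = n!\binom{rv+s}{n}$ and $(rv|r)_\ell = r^\ell \ell!\binom{v}{\ell}$ yields the inverse relation
\[
n!\binom{rv+s}{n} = \sum_{\ell=0}^n S(n,\ell|1,r,s)\, r^\ell\, \ell!\, \binom{v}{\ell}.
\]
This rewrites the binomial coefficient $\binom{rv+s}{n}$, where $v$ is a free variable, as a finite sum of the more tractable $\binom{v}{\ell}$.

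For the first identity \eqref{lem:stir1eq1}, I substitute this formula into the left-hand side, swap the order of summation, and apply the geometric series identity $\sum_{v\ge 0} z^v \binom{v}{\ell} = z^\ell/(1-z)^{\ell+1}$ with $z=w^r$. Dividing by $w$ yields the claim. No hypothesis on $s$ is used here, so \eqref{lem:stir1eq1} is actually valid in a larger range of parameters than stated.

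For the second identity \eqref{lem:stir1eq2}, I reindex the outer sum with $u=v+1$ to obtain
\[
\sum_{v\ge 0}\frac{w^{rv-1}}{v+1}\binom{rv+s}{n} = w^{-r-1}\sum_{u\ge 1}\frac{w^{ru}}{u}\binom{ru+s-r}{n}.
\]
Differentiating $\sum_{u\ge 1} z^u u^{-1}\binom{ru+s-r}{n}$ in $z$ produces $z^{-1}\sum_{u\ge 0} z^u\binom{ru+s-r}{n}$, where the hypothesis $r\le s\le n+r-1$, equivalently $0\le s-r\le n-1$, has been used to absorb the missing $u=0$ term via $\binom{s-r}{n}=0$. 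Now invoking the first identity (already established) with $s$ replaced by $s-r$ and integrating termwise from $0$ to $z$, the $\ell=0$ contribution would produce a $-\log(1-t)/t$ singularity at $t=0$, but under the hypothesis its coefficient $S(n,0|1,r,s-r) = (s-r|1)_n$ also vanishes, so the offending term disappears. For $\ell\ge 1$ the substitution $y=t/(1-t)$ evaluates $\int_0^z t^{\ell-1}(1-t)^{-\ell-1}\,dt = z^\ell/(\ell(1-z)^\ell)$. Reindexing $\ell=m+1$, extending the summation up to $m=n$ using $S(n,n+1|\cdot,\cdot,\cdot)=0$, and restoring the $w^{-r-1}$ prefactor recovers the claimed right-hand side.

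The main obstacle is the integration step for \eqref{lem:stir1eq2}: without the hypothesis $r\le s\le n+r-1$ the $\ell=0$ term of the rational expansion survives and integrates to a logarithm incompatible with the claimed rational closed form. The hypothesis is doubly crucial, simultaneously eliminating the $u=0$ term of the shifted power series and the $\ell=0$ term of its rational expansion; the cancellations conspire so that no transcendental pieces appear on either side.
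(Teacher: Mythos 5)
Your proof is correct, but it takes a genuinely different route from the paper's. The paper starts from the Dobi\'nski-type formula \eqref{stirdef4} and applies a Laplace transform (sending $e^x x^\ell$ to $\ell!/(w-1)^{\ell+1}$, then mapping $w\mapsto 1/w$ and $w\mapsto w^r$); for the second identity it shifts the index and divides by $x$ at the level of the exponential generating function, using the hypothesis $r\le s\le n+r-1$ to kill the two $1/x$ terms whose Laplace transforms would not exist. You instead expand $n!\binom{rv+s}{n}$ directly in the basis $\binom{v}{\ell}$ via the defining change-of-basis relation for $S(n,\ell|1,r,s)$ (the binomial-transform inverse of \eqref{stirdef7}) and then sum the negative binomial series $\sum_{v} z^v\binom{v}{\ell}=z^\ell/(1-z)^{\ell+1}$; for the second identity you shift and integrate, with the same hypothesis killing the $\binom{s-r}{n}$ and $S(n,0|1,r,s-r)=(s-r|1)_n$ terms that would otherwise produce logarithms. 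The two arguments use the hypothesis on $s$ at essentially the same structural point, just on opposite sides of the shift $s\mapsto s-r$, and your integral $\int_0^z t^{\ell-1}(1-t)^{-\ell-1}\,dt = z^\ell/(\ell(1-z)^\ell)$ together with the reindexing $\ell=m+1$ and $S(n,n+1|\cdot,\cdot,\cdot)=0$ checks out. What your version buys is self-containedness: it avoids the Laplace transform and the analytic continuation implicit in $w\mapsto 1/w$, working entirely inside the disk of convergence, whereas the paper's version reuses the already-quoted identity \eqref{stirdef4}. Your remark that \eqref{lem:stir1eq1} requires no hypothesis on $s$ is consistent with the paper, whose proof likewise invokes the hypothesis only for \eqref{lem:stir1eq2}.
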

\begin{proof}
We begin with the Dobi\'nski-type formula of Equation \eqref{stirdef4}:
\begin{equation}
    \sum_{\ell=0}^\infty \frac{x^\ell}{\ell!} \binom{r\ell+s}{n} =  \frac{1}{n!} \sum_{\ell=0}^n S(n,\ell|1,r,s)r^\ell e^xx^\ell.
\end{equation}
We will take Laplace transforms of both sides. Note that the Laplace transform with $\Re(w)>1$ acts on monomials as
$$\int_0^\infty e^{-wx}x^\ell dx = \frac{\ell!}{w^{\ell+1}}, $$
so that 
$$\int_0^\infty e^{-wx}e^x x^\ell dx = \frac{\ell!}{(w-1)^{\ell+1}}. $$
Laplace transforming both sides gives
\begin{align}
    \sum_{\ell=0}^\infty\frac{1}{w^{\ell+1}} \binom{r\ell+s}{n} &=\frac{1}{n!} \sum_{\ell=0}^n S(n,\ell|1,r,s)r^\ell \int_0^\infty e^{(1-w)x} x^\ell dx  \\
    &=\frac{1}{n!} \sum_{\ell=0}^n S(n,\ell|1,r,s)r^\ell \frac{\ell!}{(w-1)^{\ell+1}} 
\end{align}
with $\Re(w)>1$. Now mapping $w\mapsto 1/w$ gives
$$ \sum_{\ell=0}^\infty{w^{\ell+1}} \binom{r\ell+s}{n} = \frac{1}{n!} \sum_{\ell=0}^n \ell! S(n,\ell|1,r,s)r^\ell \frac{w^{\ell+1}}{(1-w)^{\ell+1}} $$
with $\Re(w)<1$.
Dividing by $w$, mapping $w\mapsto w^r$, and dividing by $w$ again gives the first result.

For the second result with the $\frac{1}{\ell+1}$ factor, we again begin with 
\begin{equation*}
    \sum_{\ell=0}^\infty \frac{x^\ell}{\ell!} \binom{r\ell+s}{n} =  \frac{1}{n!} \sum_{\ell=0}^n S(n,\ell|1,r,s)r^\ell e^xx^\ell,
\end{equation*}
 separate out the $\ell=0$ terms on both sides, shift $\ell$ by $1$, and divide through by $x$:
 $$   \binom{s}{n} + \sum_{\ell=1}^\infty \frac{x^\ell}{\ell!} \binom{r\ell+s}{n} = \frac{1}{n!}S(n,0|1,r,s)e^x+ \frac{1}{n!} \sum_{\ell=1}^n S(n,\ell|1,r,s)r^\ell e^xx^\ell
$$
and
 $$   \frac{1}{x}\binom{s}{n} + \sum_{\ell=0}^\infty \frac{x^\ell}{(\ell+1)!} \binom{r\ell+r+s}{n} = \frac{1}{n!}S(n,0|1,r,s)\frac{e^x}{x}+ \frac{1}{n!} \sum_{\ell=0}^{n-1} S(n,\ell+1|1,r,s)r^{\ell+1} e^xx^{\ell}.
$$

Note that the Laplace transform of $1/x$ does not exist, so we need both of the initial terms to drop. When  $0\leq s<n$ is an integer, the binomial coefficient evaluates to $0$ and $S(n,0|1,r,s) = s(s-1)\cdots (s-n+1) = 0$. Now Laplace transform both sides with $\Re(w)>1$:
 $$  \sum_{\ell=0}^\infty \frac{1}{(\ell+1)w^{\ell+1}} \binom{r\ell+r+s}{n} =  \frac{1}{n!} \sum_{\ell=0}^{n-1} S(n,\ell+1|1,r,s)r^{\ell+1} \frac{\ell!}{(w-1)^{\ell+1}}.
$$
Map $w\mapsto 1/w$, so $\Re(w)<1$, divide by $w$, and map $s\mapsto s-r$ so that $r\leq s < n+r$:
\begin{equation}
\sum_{\ell=0}^\infty \frac{w^\ell}{\ell+1} \binom{r\ell+s}{n} =  \frac{1}{n!} \sum_{\ell=0}^{n-1} S(n,\ell+1|1,r,s-r)r^{\ell+1}\ell! \frac{w^\ell}{(1-w)^{\ell+1}}.    
\end{equation}
Map $w\mapsto w^r$ and divide by $w$:
 $$  \sum_{\ell=0}^\infty \frac{w^{r\ell-1}}{\ell+1} \binom{r\ell+s}{n} =  \frac{1}{n!} \sum_{\ell=0}^{n-1} \ell!S(n,\ell+1|1,r,s-r)r^{\ell+1} \frac{w^{r\ell-1}}{(1-w^r)^{\ell+1}}
$$
to finish.
\end{proof}

An alternate proof of Lemma \ref{lemterminating} proceeds by starting with Equation \eqref{lem:stir1eq2}, clearing denominators by $(1-w^r)^{n}$, using the binomial theorem on $(1-w^r)^{n-\ell}$, and inserting the closed form expression for generalized Stirling numbers from Equation \eqref{stirdef7}. Then we can switch the order of summation in the triple sum and evaluate the innermost sum using a classical binomial identity to get back down to a double sum.


Combining all of these lemmas proves Theorem \ref{thm3}, giving closed forms for the $(k+1)$-st derivative of $x^{k-r}H(x^r)$.

\section{Special cases}\label{sec4}
We will simplify the cases $r=1,k$ which Yuster originally studied in \cite{Yuster23}. This will prove Corollaries \ref{cor6} and \ref{cor7}. 
The following sequence, which has been studied many times, makes an appearance. 
\begin{defn}
Define the \textbf{$s$-binomial coefficients} through the generating function
\begin{equation}\label{snomialdef}
    \sum_{\ell=0}^{ks} \binom{k}{\ell}_s x^\ell := (1+x+x^2+\cdots + x^s)^k = \left(\frac{1-x^{s+1}}{1-x}\right)^k.
\end{equation}
\end{defn}
A 1731 result of de Moivre \cite{DM1731} gives the closed form
\begin{equation}\label{sbinom}
    \binom{k}{\ell}_{s-1} = \sum_{v=0}^{\left \lfloor \ell/s \right \rfloor} (-1)^v \binom{k}{v}\binom{\ell-vs+k-1}{k-1},
\end{equation}
where the restriction $v\leq \left \lfloor \ell/s \right \rfloor$ comes from setting $\ell-s+k-1\geq k-1$ so that the second binomial coefficient is positive.

We repeat the statement of Corollary \ref{cor7}. Equation \eqref{cor7eq2} proves an observation of Yuster that the coefficients are given by OEIS sequence A108267, which is $\binom{k}{\ell k}_{k-1} $.
\begin{cor}Consider real $0< x < 1$ and $\omega=e^{\frac{2\pi i}{k}} $ a primitive $k$-th root of unity. In terms of $s$-binomial coefficients defined in Definition \eqref{snomialdef}, 
\begin{align}
 \left(\frac{d}{dx}\right)^{k+1}H(x^k)  &= -k\cdot k! \sum_{\ell=0}^\infty \binom{k+k\ell-1}{k-1} x^{k\ell-1} \label{cor7eq3} \\
 &= -\frac{k!}{x} \sum_{j=0}^{k-1} \frac{1}{(1-\omega^jx)^k}\label{cor7eq1} \\
 &=-\frac{k\cdot k!}{x(1-x^k)^k} \sum_{\ell=0}^{k-1} \binom{k}{\ell k}_{k-1}x^{k\ell} \label{cor7eq2}.
 \end{align}
\end{cor}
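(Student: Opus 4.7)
The starting point is Theorem \ref{thm3}, Equation \eqref{thm3eq1}, specialized to $r=k$:
\[
\left(\frac{d}{dx}\right)^{k+1}H(x^k) = -k\cdot k! \sum_{\ell=0}^\infty \binom{k+k\ell}{k}\frac{1}{\ell+1}x^{k\ell-1}.
\]
To obtain Equation \eqref{cor7eq3}, I would observe the factorial identity
\[
\frac{1}{\ell+1}\binom{k(\ell+1)}{k} = \frac{(k+k\ell)!}{k!(k\ell)!(\ell+1)} = \frac{(k+k\ell-1)!}{(k-1)!(k\ell)!} = \binom{k+k\ell-1}{k-1},
\]
since $(k+k\ell)/(k(\ell+1)) = 1$. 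Substituting this converts the series directly into Equation \eqref{cor7eq3}.

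For Equation \eqref{cor7eq1}, I would invoke the standard generating function $(1-y)^{-k} = \sum_{m \geq 0} \binom{k+m-1}{k-1} y^m$ together with the root-of-unity filter $\frac{1}{k}\sum_{j=0}^{k-1}\omega^{jm} = \mathbbm{1}[k \mid m]$. This gives
\[
\frac{1}{k}\sum_{j=0}^{k-1}\frac{1}{(1-\omega^j x)^k} = \sum_{m=0}^\infty \binom{k+m-1}{k-1}x^m \cdot \frac{1}{k}\sum_{j=0}^{k-1}\omega^{jm} = \sum_{\ell=0}^\infty \binom{k+k\ell-1}{k-1} x^{k\ell},
\]
which, after multiplying by $-k \cdot k!/x$, identifies \eqref{cor7eq1} with \eqref{cor7eq3}.

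For Equation \eqref{cor7eq2}, I would use the factorization $1-x^k = \prod_{j=0}^{k-1}(1-\omega^j x)$, so that
\[
(1-x^k)^k \sum_{j=0}^{k-1}\frac{1}{(1-\omega^j x)^k} = \sum_{j=0}^{k-1}\left(\frac{1-x^k}{1-\omega^j x}\right)^k.
\]
Because $\omega^{jk}=1$, the inner fraction telescopes to $1+\omega^j x + \cdots + (\omega^j x)^{k-1}$. Raising to the $k$-th power and applying the defining expansion $(1+y+\cdots+y^{k-1})^k = \sum_{m=0}^{k(k-1)}\binom{k}{m}_{k-1}y^m$ with $y = \omega^j x$ yields
\[
\sum_{j=0}^{k-1}\left(\frac{1-x^k}{1-\omega^j x}\right)^k = \sum_{m=0}^{k(k-1)} \binom{k}{m}_{k-1} x^m \sum_{j=0}^{k-1}\omega^{jm} = k \sum_{\ell=0}^{k-1}\binom{k}{\ell k}_{k-1} x^{k\ell},
\]
again by the root-of-unity filter. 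Dividing by $(1-x^k)^k$ and multiplying through by $-k!/x$ converts \eqref{cor7eq1} into \eqref{cor7eq2}.

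There is no real obstacle here beyond bookkeeping: every step is a one-line identity (factorial reduction, geometric series, root-of-unity filter, $s$-nomial definition). The only mildly delicate point is recognizing that the exponent bound $k(k-1)$ on the $s$-nomial expansion interacts correctly with the filter, giving the upper limit $\ell = k-1$ in \eqref{cor7eq2}; this follows immediately from $\lfloor (k^2-k)/k \rfloor = k-1$.
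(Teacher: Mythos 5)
Your proposal is correct and follows essentially the same route as the paper: the same factorial cancellation $\frac{1}{\ell+1}\binom{k(\ell+1)}{k}=\binom{k+k\ell-1}{k-1}$ for \eqref{cor7eq3}, and the root-of-unity filter combined with the generalized binomial theorem and the $s$-nomial generating function for \eqref{cor7eq1} and \eqref{cor7eq2} (the paper phrases the last step as a general multisection identity applied to $F(x)=\left(\frac{1-x^k}{1-x}\right)^k$, but that is the same computation as your term-by-term expansion). The bookkeeping, including the upper limit $\ell=k-1$, checks out.
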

\begin{proof}
Specializing Theorem \ref{thm3} to $r=k$ and noting the binomial coefficient identity
$$\binom{k+k\ell}{k} \frac{1}{\ell+1} =  \frac{k+k\ell}{k}\binom{k+k\ell-1}{k-1} \frac{1}{\ell+1} =\binom{k+k\ell-1}{k-1}$$
proves Equation \eqref{cor7eq3}. Now let $\omega = e^{2 \pi i/k}$ be a primitive $k$-th root of unity and write 
\begin{align*}
     \left(\frac{d}{dx}\right)^{k+1}H(x^k) &= -\frac{k\cdot k!}{x}\sum_{\ell=0}^\infty \binom{k+k\ell-1}{k\ell}x^{k\ell} \\
     &=-\frac{k\cdot k!}{x}\sum_{\ell=0}^\infty \binom{k+\ell-1}{\ell}x^{\ell} \mathbbm{1}\left[\ell \equiv 0 \pmod{k}\right] \\
     &= -\frac{k!}{x}\sum_{\ell=0}^\infty x^\ell \binom{k+\ell-1}{\ell}  \sum_{j=0}^{k-1} \omega^{j\ell}.
\end{align*}
We divided by $k$ since the inner sum along roots of unity is zero unless $\ell \equiv 0 \pmod{k}$, in which case it is $k$.  Now, we use the generalized binomial theorem to deduce Equation \eqref{cor7eq1}
\begin{align*}
    -\frac{k!}{x}\sum_{\ell=0}^\infty x^\ell \binom{k+\ell-1}{\ell}  \sum_{j=0}^{k-1} \omega^{j\ell} &= -\frac{k!}{x} \sum_{j=0}^{k-1} \sum_{\ell=0}^\infty \binom{k+\ell-1}{\ell} (\omega^j x)^\ell  = -\frac{k!}{x} \sum_{j=0}^{k-1} \frac{1}{(1-\omega^jx)^k}.
\end{align*}

Now note that if $F(z) = \sum_{n=0}^\infty a_n z^n$, we have $\sum_{n=0}^\infty a_{kn} z^{kn} = \frac{1}{k} \sum_{j=0}^{k-1}{F(w^jz)}$, where $\omega$ is a primitive $k$-th root of unity. Then by setting $F(x)= \left(\frac{1-x^k}{1-x}\right)^k$ to be the generating function of $\binom{k}{\ell}_{k-1}$, we have
\begin{align}
\sum_{\ell=0}^k \binom{k}{k\ell}_{k-1}x^{k\ell} = \frac{1}{k}\sum_{j=0}^{k-1} \left(\frac{1-(\omega^jx)^k}{1-\omega^jx}\right)^k = \frac{(1-x^k)^k}{k} \sum_{j=0}^{k-1} \frac{1}{(1-\omega^jx)^k},
\end{align}
which proves Equation \eqref{cor7eq2}. Note that this is a multisection identity which essentially computed the Fourier expansion of the $(k-1)$-binomial generating function.
\end{proof}

\begin{cor}
We have
\begin{equation}\label{eq:corbadlabel}
\left(\frac{d}{dx}\right)^{k+1}x^{k-1}H(x) = \frac{(k-1)!}{x^2} \left(1 - \frac{1}{(1-x)^k}\right). 
\end{equation}\end{cor}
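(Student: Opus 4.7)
The plan is to specialize Lemma \ref{lem8} (equivalently, Equation \eqref{thm3eq1}) at $r=1$. This immediately gives, for $0<x<1$,
$$\left(\frac{d}{dx}\right)^{k+1}x^{k-1}H(x) = -k! \sum_{\ell=0}^\infty \binom{k+\ell}{k} \frac{x^{\ell-1}}{\ell+1}.$$
All of the analytic work — interchanging derivatives with the entropy series, and the factorial cancellation that turns the derivative of $x^{k+r\ell}$ against $\frac{1}{\ell(\ell+1)}$ into a clean binomial coefficient — has already been carried out there, so from this point on the proof is purely algebraic manipulation of a single power series.

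The first key step is the absorption identity $(\ell+1)\binom{k+\ell}{\ell+1}=k\binom{k+\ell}{\ell}$, which rearranges to
$$\frac{1}{\ell+1}\binom{k+\ell}{k}=\frac{1}{k}\binom{k+\ell}{\ell+1}.$$
Substituting this into the series, the factor of $k!/k=(k-1)!$ emerges naturally, and after the index shift $m=\ell+1$ one obtains
$$-(k-1)!\sum_{\ell=0}^\infty\binom{k+\ell}{\ell+1}x^{\ell-1}=-\frac{(k-1)!}{x^2}\sum_{m=1}^\infty\binom{k+m-1}{m}x^{m}.$$

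The final step is to recognize the inner sum as the tail (missing the $m=0$ term) of the standard generalized binomial series $\sum_{m=0}^\infty\binom{k+m-1}{m}x^m=(1-x)^{-k}$. This gives $(1-x)^{-k}-1$, and the stated identity
$$\frac{(k-1)!}{x^2}\left(1-\frac{1}{(1-x)^k}\right)$$
follows immediately. There is no real obstacle: the binomial absorption identity is one line and the geometric-type closed form is classical. The only thing worth checking is that the $\ell=0$ term $-k!/x$ in the original series is correctly absorbed into the $m=1$ term of the shifted series, which it is (both equal $-k!/x$). A quick sanity check at $k=1$ recovers $H''(x)=-1/(x(1-x))$, confirming the formula.
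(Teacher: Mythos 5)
Your proof is correct and follows essentially the same route as the paper: specialize the series expansion of Lemma \ref{lem8} to $r=1$, apply the absorption identity $\frac{1}{\ell+1}\binom{k+\ell}{k}=\frac{1}{k}\binom{k+\ell}{\ell+1}$, shift the index, and sum the resulting generalized binomial series. The only difference is cosmetic bookkeeping of where the $\ell=0$ term goes, which you handle correctly.
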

\begin{proof}
Consider Equation \eqref{thm3eq1} with $r=1$, so that
$$ \left(\frac{d}{dx}\right)^{k+1}x^{k-1}H(x) = - k! \sum_{\ell=0}^\infty \binom{k+\ell}{k} \frac{1}{\ell+1} x^{\ell-1}.$$
Now use the generalized binomial theorem to show $$ \sum_{\ell=0}^\infty \binom{k+\ell}{k} \frac{1}{\ell+1} x^{\ell} =\frac{1}{k}\sum_{\ell=0}^\infty \binom{k+\ell}{\ell+1}  x^{\ell}  = \frac{1}{k}\sum_{\ell=1}^\infty \binom{k+\ell-1}{\ell}  x^{\ell-1} = \frac{1}{kx} \left(\frac{1}{(1-x)^{k}}-1\right),$$
and we are done.
\end{proof}

\section{Real rootedness reduction}\label{sec:reduction}

We finally show that Conjecture \ref{rootconj} about real roots implies inequality \eqref{mainineq} for real exponents. 
\begin{thm}
    The real rootedness Conjecture \ref{rootconj} implies the entropy inequality of Conjecture \ref{mainthm} for all real $k\geq 1$.
\end{thm}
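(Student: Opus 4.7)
The plan is to prove inequality \eqref{mainineq} first for rational exponents $k/r$ with $k>r\geq 1$ coprime integers (assuming Conjecture \ref{rootconj} for that pair), and then extend to all real $k\geq 1$ by continuity. For the rational case I will substitute $x=y^r$, so that \eqref{mainineq} with exponent $k/r$ becomes
$$ f(y) := \alpha_{k/r} H(y^k) - y^{k-r} H(y^r) \geq 0, \quad 0 \leq y \leq 1. $$
Applying Theorem \ref{thm3} to both pieces then gives
$$ f^{(k+1)}(y) = -\frac{k!}{y(1-y^k)^k(1-y^r)^k}\, p_{k,r}(y)\quad \text{on } (0,1), $$
so Conjecture \ref{rootconj} identifies the number of zeros of $f^{(k+1)}$ in $(0,1)$, counted with multiplicity, as exactly $2$.

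The heart of the argument will be to exhibit three zeros of $f$ whose total multiplicity is $k+3$, matching the generalized Rolle bound. First, the expansion of Lemma \ref{lem:fund} gives
$$ f(y) = -(k\alpha_{k/r}-r)\, y^k\log y + O(y^k) $$
near $y=0$, from which $f^{(j)}(0^+)=0$ for $j=0,1,\dots,k-1$, contributing multiplicity at least $k$ at $y=0$. Second, at $y_*:=(1+\alpha_{k/r})^{-1/r}$ the identity $y_*^k+y_*^r=1$ (which follows from the functional equation) yields $H(y_*^k)=H(y_*^r)$ by the symmetry $H(u)=H(1-u)$, hence $f(y_*)=0$; a direct calculation using $H'(u)=\log((1-u)/u)$, the consequent identity $H'(y_*^k)=-H'(y_*^r)=-\log\alpha_{k/r}$, and the rewritten functional equation $\log(1+\alpha_{k/r})=-\tfrac{r}{k-r}\log\alpha_{k/r}$, makes every term in $f'(y_*)$ cancel, giving multiplicity at least $2$ at $y_*$. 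Third, $f(1)=0$ contributes multiplicity at least $1$. Summing, $N(f,[0,1])\geq k+3$.

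The iterated Rolle's theorem, applied on $[0,1]$ with the multiplicities at $y=0$ interpreted via continuous extension of the lower-order derivatives, yields $N(f^{(k+1)},(0,1))\geq N(f,[0,1])-(k+1)\geq 2$. Matching this against the upper bound from Conjecture \ref{rootconj} forces equality throughout: $f$ has no zeros in $[0,1]$ other than $0,y_*,1$, and the multiplicity at $y_*$ is exactly $2$. Because $y_*$ is then a zero of even multiplicity, $f$ does not change sign there; since $f>0$ near $y=0$ (the leading coefficient $k\alpha_{k/r}-r$ is positive by Lemma \ref{lem:alphabound}, and $-y^k\log y>0$) and $f$ has no further zeros on $(0,y_*)$, we get $f>0$ on $(0,y_*)$, and the symmetric argument near $y=1$ gives $f>0$ on $(y_*,1)$. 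Hence $f\geq 0$ on $[0,1]$, proving \eqref{mainineq} for the exponent $k/r$.

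For real irrational $k$ I will approximate by rationals $k_n\to k$. Both $H(x^{k_n})\to H(x^k)$ and $x^{k_n-1}\to x^{k-1}$ pointwise on $[0,1]$, and the implicit function theorem applied to \eqref{alphafunc} (whose left side is strictly increasing and right side strictly decreasing in $\alpha$) gives $\alpha_{k_n}\to\alpha_k$; passing to the limit extends the rational inequality to all real $k\geq 1$. I expect the main obstacle to be the algebraic identity $f'(y_*)=0$: without this extra vanishing, the zeros $0,y_*,1$ would contribute only $k+2$ to the multiplicity count, which would leave room in the Rolle ledger for a fourth zero of $f$ in the interior corresponding to a negative excursion of $f$, and the exact cancellation in $f'(y_*)$ is what welds Conjecture \ref{rootconj} to Conjecture \ref{mainthm}. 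A secondary care-point, easily handled, is the blow-up of $f^{(k)}$ at $y=0$ and of $f^{(j)}$ for $j\geq 2$ at $y=1$; only the continuous extensions $f^{(j)}(0^+)=0$ for $j<k$ and $f(1^-)=0$ enter the zero count, while $f^{(k+1)}$ is only inspected on the open interval.
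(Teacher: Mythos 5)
Your proposal is correct and follows essentially the same route as the paper: the same function $f$, the same zeros of multiplicities $k$, $2$, $1$ at $0$, $(1+\alpha_{k/r})^{-1/r}$, $1$, the same Rolle count of $k+3$ matched against the two interior zeros of $f^{(k+1)}$ granted by Conjecture \ref{rootconj}, and the same continuity extension from rational to real exponents. The only (harmless) difference is that you derive positivity near $0$ directly from the leading term $-(k\alpha_{k/r}-r)\,y^k\log y$ together with Lemma \ref{lem:alphabound}, whereas the paper cites Yuster's Lemma 3.3 for this step.
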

\begin{proof}
Our proof follows the framework of \cite{Yuster23}, but with the extra parameter $r$. The flexibility given by the extra $r$ parameter is crucial to proving the reduction for real exponents, as opposed to integer exponents. Consider the function 
$$f_{k,r}(x):=\alpha H(x^k) - x^{k-r}H(x^r),$$
where $\alpha:=\alpha_{k/r}$ satisfies the function equation \eqref{alphafunc} with parameter $k/r$, which is equivalent to
\begin{equation}
    \alpha^r = \frac{1}{(1+\alpha)^{k-r}}.
\end{equation}
We omit the subscript in $\alpha_{k/r}$ for clarity. Our goal is to compute all roots of $f_{k,r}(x)$ in $[0,1]$. 

We have a trivial root at $x=1$ since $H(1)=0$. 

We have a double root at $\frac{1}{(1+\alpha)^{1/r}}$ since we can calculate that $f_{k,r}\left(\frac{1}{(1+\alpha)^{1/r}}\right)=f'_{k,r}\left(\frac{1}{(1+\alpha)^{1/r}}\right)=0$. Using the symmetry $H(x)=H(1-x)$ and the functional equation for $\alpha$, we have 
\begin{align*}
    f_{k,r}\left(\frac{1}{(1+\alpha)^{1/r}}\right) &= \alpha H\left( \frac{1}{(1+\alpha)^{k/r}}\right) - \frac{1}{(1+\alpha)^{k/r-1}}H\left(\frac{1}{1+\alpha}\right) \\
    &= \alpha H\left(\frac{\alpha}{1+\alpha}\right)-\alpha H\left(\frac{1}{1+\alpha}\right) \\
    &=0.
\end{align*}
We now compute the derivative
\begin{align}\label{fkrderiv}
    \frac{1}{x^{k-r-1}} \frac{d}{dx}f_{k,r}(x) = \alpha k x^{r}\log \left( \frac{1-x^k}{x^k}\right) -kx^r\log \left( \frac{1-x^r}{x^r}\right) + (k-r)\log(1-x^r).
\end{align}
Using the functional equation for $\alpha$ several times, at $x=\frac{1}{(1+\alpha)^{1/r}}$ we have 
$$1-x^r = \frac{\alpha}{1+\alpha}, \quad \frac{1-x^r}{x^r} =\alpha , \quad  \frac{1-x^k}{x^k} = (1+\alpha)^{k/r}-1 = \frac{1+\alpha}{\alpha}-1=\frac{1}{\alpha}.$$
Now note that $$ (k-r) \log \left(\frac{\alpha}{1+\alpha}\right) = k  \log \left(\frac{\alpha}{1+\alpha}\right) -  \log \left(\frac{\alpha}{1+\alpha}\right)^r =k  \log \frac{\alpha}{1+\alpha} -  \log \frac{1}{(1+\alpha)^k} = k\log \alpha. $$
Substituting this into the derivative \eqref{fkrderiv} gives 
\begin{align*}
    (1+\alpha)^{\frac{k-r-1}{r}}f'_{k,r}\left(\frac{1}{(1+\alpha)^{1/r}}\right) &= k \frac{\alpha }{1+\alpha} \log \frac{1}{\alpha} - \frac{k}{1+\alpha}\log \alpha + (k-r) \log \left(\frac{\alpha}{1+\alpha}\right) \\
    &= - k \frac{\alpha }{1+\alpha} \log \alpha - \frac{k}{1+\alpha}\log \alpha  + k\log \alpha \\
    &= 0.
\end{align*}

We also have a root of multiplicity $k$ at $x=0$. Equation \eqref{lem:fund2} states that
$$x^{k-r}H(x^r) = -rx^k\log x+x^k - \sum_{\ell=1}^\infty \frac{x^{k+r\ell}}{\ell(\ell+1)},$$
so that for $0\leq t\leq k-1$ we have 
\begin{align*}
    \left(\frac{d}{dx}\right)^{t}x^{k-r}H(x^r) \bigg\vert_{x=0} &= -r  \left(\frac{d}{dx}\right)^{t} x^k\log x \bigg\vert_{x=0}.
\end{align*}
Using the iterated product rule and separating the term at $\ell=0$ gives
\begin{align*}
    \left(\frac{d}{dx}\right)^{t} x^k\log x &= -r\sum_{\ell=0}^t\binom{t}{\ell} \left(\frac{d}{dx}\right)^{t-\ell} x^k \cdot \left(\frac{d}{dx}\right)^{\ell} \log x \\
    &= -r\frac{k!}{(k-t)!} x^{k-t}\log x  -r \sum_{\ell=1}^t \binom{t}{\ell}\frac{k!}{(k-t+\ell)!}x^{k-t+\ell} \frac{(-1)^{\ell-1} }{x^\ell} \\
    &= -r\frac{k!}{(k-t)!}x^{k-t}\log x  -r  \sum_{\ell=1}^t (-1)^{\ell-1} \binom{t}{\ell}\frac{k!}{(k-t+\ell)!}x^{k-t}.
\end{align*}
Irrespective of the value of $r$, for $0\leq t \leq k-1$ we have $\lim_{x\to 0}x^{k-t}\log x = 0$, which in turn means that 
$$ \left(\frac{d}{dx}\right)^{t} f_{k,r}(x) \bigg\vert_{x=0} =0.$$
Now, we appeal to Theorem \ref{thm3}, which states that 
\begin{align*}
    \left(\frac{d}{dx}\right)^{k+1}f_{k,r}(x) &= -\alpha\frac{k\cdot k! }{x(1-x^k)^{k}}h_{k,k}(x) + \frac{r\cdot k! }{x(1-x^r)^{k}} h_{k,r}(x) \\
    &= - \frac{k!}{x(1-x^r)^{k}(1-x^k)^{k}} \left(\alpha k (1-x^r)^{k} h_{k,k}(x) - r (1-x^k)^{k}h_{k,r}(x)\right),
\end{align*}
where $$ h_{k,r}(x)=\sum_{j=0}^{k-1} x^{rj}\sum_{v=0}^j\frac{(-1)^{j-v}}{v+1}\binom{rv+k}{k}  \binom{k}{j-v}$$ as before. Now assume the conjecture that the numerator has two real roots in $0<x<1$. By Rolle's theorem applied $k+1$ times to the $(k+1)$-st derivative, it follows that $f_{k,r}(x)$ contains at most $k+3$ roots in $[0,1]$, counting multiplicity. We have a trivial root at $x=1$, a double root at $x = \frac{1}{(1+\alpha)^{1/r}}$, and a root of multiplicity $k$ at $x=0$. Therefore, we have found all $k+3$ roots of $f_{k,r}(x)$ in $[0,1]$.

Because $f_{k,r}(x)$ has a double root at $\frac{1}{(1+\alpha)^{1/r}}$, and the other roots are at the endpoints of the interval $[0,1]$, it must be either non-positive or non-negative on $[0,1]$. Yuster \cite[Lemma 3.3]{Yuster23} showed that there is a small $\eps$ such that $f_{k,1}(x)>0$ for $0<x<\eps$ and integers $k\geq 2$. The exact same proof shows that there is an $\eps_r$ so that $f_{k,r}(x^{1/r})>0$ for $0<x<\eps_r$ and $k/r>1$. Since $f_{k,r}$ takes a positive value, it must be non-negative on $[0,1]$.

Given that $f_{k,r}(x)=\alpha_{k/r} H(x^k) - x^{k-r}H(x^r)\geq 0, 0\leq x\leq 1$, we now map $x\mapsto x^{1/r}$, which sends $[0,1]$ to $[0,1]$. Therefore $\alpha_{k/r}H\left(x^{k/r}\right) - x^{k/r-1}H(x)\geq 0$. However we picked $k>r\geq 1$ as arbitrary coprime integers, so that $k/r$ runs through all rationals greater than 1, and the inequality $\alpha_{q} H(x^q) - x^{q-1}H(x)\geq 0$ holds for all rational $q>1$. Since each term $\alpha_q, H(x^q), x^{q-1}$ is continuous in $q>1$, the inequality must also hold for all real $q>1$. The inequality is also trivial at $q=1$, which finishes the proof.
\end{proof}

The previous proof shows that if we can verify that $p_{k,r}(x):=\alpha_{k/r} k (1-x^r)^{k} h_{k,k}(x) - r (1-x^k)^{k}h_{k,r}(x)$ has two roots in $(0,1)$ for a fixed pair of integers $k,r$, then we have verified inequality \eqref{mainineq} for the rational exponent $k/r$. For instance, at $k=3,r=2,\alpha_{3/2}\approx 0.754878,$ this polynomial is
\begin{align*}
   p_{3,2}(x)&=3 \alpha_{3/2}\left(-x^{12}+3 x^{10}-7 x^9-3 x^8+21 x^7-21 x^5+3 x^4+7 x^3-3 x^2+1\right)  \\
   &\quad -\left(\frac{2 x^{13}}{3}-4 x^{11}-2 x^{10}-2 x^9+12 x^8+2 x^7+6 x^6-12
   x^5-\frac{2 x^4}{3}-6 x^3+4 x^2+2\right) \\
   &\approx -\frac{2x^{13}}{3} - 2.26x^{12} +4x^{11}+8.79x^{10}-13.85x^9-18.79x^8+45.56x^7-6x^6-35.56x^5+7.46x^4 \\
   &\quad+21.885x^3-10.79x^2+0.26,
\end{align*}
which has seven sign changes in the coefficients and is not suited to an application of Descartes' rule of signs. Instead, we can numerically evaluate that this has two real roots in $(0,1)$ at $\approx 0.204863, 0.74186,$ which proves the main entropy inequality \eqref{mainineq} for the fractional exponent $3/2$.

Alternatively, as noted in the introduction we could consider the transformed polynomial 
\begin{align*}
    (1+y)^{k^2+kr-r} p_{k,r}\left(\frac{1}{1+y}\right)  = (1+y)^{13} &p_{3,2}\left(\frac{1}{1+y}\right) \approx 0.26y^{13} + 3.44y^{12}+9.85y^{11}-21.20 y^{10} -178.47y^9\\
    &-425.46y^8-507.46y^7-309.02y^6-62.01y^5+32.79y^4+19.05y^3.
\end{align*}
The coefficients can be provably correctly computed to arbitrary accuracy using interval arithmetic, so we can read off that there are exactly two sign changes in the coefficients, which correspond to two real roots $y_1,y_2 \in (0,\infty)$ by Descartes' rule of signs. This then corresponds to two real roots of $p_{3,2}(x)$ in $(0,1)$.

Also note that we can factor $(1-x)^k$ out of $p_{k,r}(x)$, while still leaving a polynomial. Equivalently, we can show that 
$$\alpha_{k/r} k \left(\frac{1-x^r}{1-x}\right)^{k} h_{k,k}(x) - r \left(\frac{1-x^k}{1-x}\right)^{k}h_{k,r}(x)$$
has two real roots in $(0,1)$, counting multiplicity. The term $\left(\frac{1-x^r}{1-x}\right)^{k}$ is the generating function for $(r-1)$-binomial coefficients given in Definition \eqref{snomialdef}. The $r=1$ case of this factored polynomial is exactly the polynomial $p_k(x)$ of Yuster \cite[Corollary 3.7]{Yuster23} which arose in his study of inequality \eqref{mainineq} for integer $k$. The $k=2,r=1$ case is additionally the polynomial $p(x)$ of Boppana \cite{Boppana23}.

\section{Functional equation}\label{sec:alpha}
We now collect some useful properties of $\alpha_k$, including basic bounds and first order asymptotics. Recall that $\alpha_k$ satisfies the functional equation \eqref{alphafunc}
$$\alpha_k = \frac{1}{(1+\alpha_k)^{k-1}}.$$
Note that the following result is tight since $\lim_{k\to 1^+}\alpha_k = 1$. 
\begin{lem}\label{lem:alphabound}
    For real $k>1$, $\alpha_k$ monotonically decreases in $k$ and satisfies
    \begin{equation}
     \frac{1}{k}< \alpha_k <1.        
    \end{equation}
\end{lem}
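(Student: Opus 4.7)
The plan is to analyze $\alpha_k$ through the auxiliary function $g(t) := t(1+t)^{k-1}$ on $(0,\infty)$. First I would check that $g'(t) = (1+t)^{k-1} + (k-1)t(1+t)^{k-2} > 0$ for $t>0$ and $k>1$, so that $g$ is strictly increasing; combined with $g(0) = 0$ and $g(1) = 2^{k-1} > 1$, this gives existence and uniqueness of $\alpha_k \in (0,1)$ solving $g(\alpha_k) = 1$, and in particular the upper bound $\alpha_k < 1$ is immediate.

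For the lower bound, by monotonicity of $g$ it suffices to prove $g(1/k) < 1$, i.e.\ $(k+1)^{k-1} < k^k$. I would set $h(k) := k\log k - (k-1)\log(k+1)$, observe $h(1) = 0$, and compute
\[
h'(k) = -\log\!\left(1 + \tfrac{1}{k}\right) + \frac{2}{k+1}.
\]
Combining this with the elementary inequality $\log(1+1/k) < 1/k$ gives $h'(k) > \frac{k-1}{k(k+1)} > 0$ for $k>1$, so $h$ is strictly increasing from $h(1) = 0$, which yields $(k+1)^{k-1} < k^k$ and hence $\alpha_k > 1/k$.

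For the monotonicity of $\alpha_k$ in $k$, I would take the logarithmic form $\log \alpha_k + (k-1)\log(1+\alpha_k) = 0$ of the functional equation and differentiate implicitly in $k$, obtaining
\[
\alpha_k'\left(\frac{1}{\alpha_k} + \frac{k-1}{1+\alpha_k}\right) = -\log(1+\alpha_k).
\]
Since $\alpha_k \in (0,1)$ and $k > 1$, both the bracketed factor and $\log(1+\alpha_k)$ are strictly positive, forcing $\alpha_k' < 0$.

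The main obstacle I anticipate is the lower bound near $k = 1$: the crude estimate $(1+1/k)^{k-1} < e$ only yields $\alpha_k > 1/k$ for $k > e-1$, so one genuinely needs a logarithmic monotonicity argument of the type above to cover the full range $k > 1$. The other parts (uniqueness with upper bound, and monotonicity in $k$) reduce to routine implicit-function calculations.
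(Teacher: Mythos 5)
Your proof is correct, but it takes a genuinely different route from the paper's. The paper changes variables to $x_k = \frac{1}{1+\alpha_k}$, where the functional equation becomes $x_k + x_k^k = 1$: uniqueness and monotonicity in $k$ then follow from the monotone dependence of $x_k + x_k^k$ on $x_k$ and on $k$ separately, the upper bound comes from $\lim_{k\to 1^+}\alpha_k = 1$ combined with monotonicity, and the lower bound is a one-line application of Bernoulli's strict inequality, $\left(1-\tfrac{1}{k+1}\right)^k > 1 - \tfrac{k}{k+1}$, to rule out $x_k \ge \tfrac{k}{k+1}$. You instead stay in the variable $\alpha_k$ via $g(t) = t(1+t)^{k-1}$, read off the upper bound directly from $g(1) = 2^{k-1} > 1$, prove the lower bound by showing $h(k) = k\log k - (k-1)\log(k+1)$ increases from $h(1)=0$, and obtain monotonicity by implicit differentiation. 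Note that your target inequality $(k+1)^{k-1} < k^k$ is exactly the paper's Bernoulli step in disguise (rearrange $\bigl(\tfrac{k}{k+1}\bigr)^k > \tfrac{1}{k+1}$), so your calculus argument on $h$ is a longer road to the same fact; on the other hand, your implicit-differentiation step buys something the paper does not make explicit, namely that $\alpha_k$ is differentiable and \emph{strictly} decreasing, where the paper's monotone-comparison argument is stated more informally. One tiny slip in your closing commentary only: the crude bound $(1+1/k)^{k-1} < e$ gives $g(1/k) < 1$ for $k > e$, not $k > e-1$; this does not affect the proof.
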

\begin{proof}
    Consider the functional equation $x_k +x_k^k=1$, written in terms of $x_k = \frac{1}{1+\alpha_k}$. This is monotonic in $0<x_k<1$ so has a unique solution in $(0,1)$, which corresponds to a unique value of $\alpha_k$ in $(0,1)$ satisfying \eqref{alphafunc}. If $k$ increases, the power $0<x_k^k<1$ decreases, so $x_k$ must monotonically increase. Then $\alpha_k = \frac{1}{x_k}-1$ monotonically decreases. Noting that $\lim_{k\to 1^+}\alpha_k = 1$ gives the upper bound.

    Assume $\alpha_k\leq 1/k$, then $x_k = \frac{1}{1+\alpha_k} \geq \frac{k}{k+1}$. Then we apply Bernoulli's (strict) inequality to $x_k+x_k^k \geq \frac{k}{k+1} + \left(1-\frac{1}{k+1}\right)^k > \frac{k}{k+1} + \frac{1}{k+1}=1$, which contradicts the functional equation $x_k+x_k^k=1$ and gives the lower bound. 
\end{proof}
We can compute the large $k$ asymptotics of $\alpha_k$. Note that $b_k \approx \log \log k$ to first order, but there are multiplicative corrections of order $\frac{1}{\log k}, \frac{1}{\log^2 k}, \ldots$.
The point of making $b_k$ the solution to an exact equation is that the remaining error term in Lemma~\ref{lem:alphak-estimate} is much smaller.
\begin{lem}\label{lem:alphak-estimate}
Let $b_k$ be the unique solution to
\begin{equation}\label{eq:def-bk}
    b_k - \log \left(1 - \frac{b_k}{\log k}\right) = \log \log k.
\end{equation}
    In the large $k$ limit, we have
\begin{align}
    \alpha_k &= \frac{\log k - b_k}{k} + O\left(\frac{\log^2 k}{k^2}\right) \\
    &= \frac{\log k}{k} + O\left( \frac{\log \log k}{k}\right).
\end{align}
\end{lem}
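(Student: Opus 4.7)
The plan is to take logarithms of the functional equation and reduce it, via Taylor expansion, to the defining equation \eqref{eq:def-bk} for $b_k$ with a controlled error. Taking logs of \eqref{alphafunc} gives
\[
\log \alpha_k = -(k-1)\log(1+\alpha_k).
\]
The lower bound $\alpha_k > 1/k$ from Lemma~\ref{lem:alphabound} yields $(1+\alpha_k)^{k-1} < k$, and hence $\alpha_k < k^{1/(k-1)} - 1 = O(\log k/k)$. So $\alpha_k = \Theta(\log k /k)$, and in particular $\alpha_k \to 0$, which justifies the Taylor expansion below.

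\textbf{Taylor expansion.} I would parametrize $\alpha_k = (\log k - b)/k$, so that $b := \log k - k\alpha_k$ is determined by $\alpha_k$. Writing $\log \alpha_k = \log\log k + \log(1 - b/\log k) - \log k$ on the left, and expanding
\[
-(k-1)\log(1+\alpha_k) = -(k-1)\alpha_k + \tfrac{(k-1)\alpha_k^2}{2} + O\!\left((k-1)\alpha_k^3\right),
\]
the rough bound $\alpha_k = O(\log k/k)$ gives $(k-1)\alpha_k = (\log k - b) + O(\log k/k)$, $(k-1)\alpha_k^2 = O(\log^2 k/k)$, and $(k-1)\alpha_k^3 = O(\log^3 k/k^2)$. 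Absorbing all three contributions into a single $O(\log^2 k / k)$ error and rearranging,
\[
b - \log\!\left(1 - \frac{b}{\log k}\right) = \log \log k + O\!\left(\frac{\log^2 k}{k}\right).
\]

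\textbf{Inversion and final estimate.} The map $f(x) := x - \log(1 - x/\log k)$ on $(-\infty, \log k)$ is strictly increasing with $f'(x) = 1 + (\log k - x)^{-1} \geq 1$, so its inverse is $1$-Lipschitz. Since $f(b_k) = \log\log k$ exactly by \eqref{eq:def-bk}, the previous display yields $|b - b_k| \leq |f(b) - f(b_k)| = O(\log^2 k/k)$, and therefore
\[
\alpha_k = \frac{\log k - b_k}{k} + O\!\left(\frac{\log^2 k}{k^2}\right).
\]
A crude two-sided bound on $b_k$ follows from the same monotonicity: $f(0) = 0 < \log\log k$ and $f(2\log\log k) \geq 2\log\log k > \log\log k$ for large $k$, so $b_k \in (0, 2\log\log k)$, i.e.\ $b_k = O(\log\log k)$. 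Substituting into the previous estimate gives the second claim $\alpha_k = \log k/k + O(\log\log k/k)$.

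\textbf{Expected obstacle.} The delicate step is the error bookkeeping in the Taylor expansion: one cannot stop at first order, because $(k-1)\alpha_k^2/2$ is already of size $\log^2 k/k$, and one must not Taylor expand $\log(1-b/\log k)$ on the left, since $b/\log k$ need not be small. Keeping that logarithm intact on both sides is what lets the comparison with $b_k$ close cleanly; after that, the monotonicity $f' \geq 1$ does all of the remaining work.
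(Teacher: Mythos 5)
Your proof is correct, and it follows the same basic strategy as the paper (take logarithms of the functional equation, isolate the correction term, keep the $\log(1-\cdot/\log k)$ factor unexpanded, and compare with the defining equation \eqref{eq:def-bk}), but the execution differs in two worthwhile ways. First, the paper changes variables to $x_k = \frac{1}{1+\alpha_k}$ and works with $\log x_k^k = \log(1-x_k)$, guessing the ansatz $x_k = 1 - \frac{\log k - \delta}{k}$ with $\delta \in [0,2\log\log k]$ and invoking the intermediate value theorem plus uniqueness of $x_k$ to show a solution of that form exists; you instead stay in the variable $\alpha_k$, extract the a priori two-sided bound $1/k < \alpha_k = O(\log k/k)$ from Lemma~\ref{lem:alphabound} and the functional equation, and then \emph{define} $b := \log k - k\alpha_k$, which avoids the guess-and-verify step entirely. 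Second, and more substantively, the paper identifies $\delta = b_k + O(\log^2 k/k)$ only ``by inspection,'' whereas you close this step rigorously: the observation that $f(x) = x - \log(1-x/\log k)$ has $f'(x) = 1 + (\log k - x)^{-1} \geq 1$ on its domain, so that $|b - b_k| \leq |f(b)-f(b_k)|$, is exactly the missing justification, and your derivation of $b_k \in (0, 2\log\log k)$ from $f(0)=0$ and $f(2\log\log k) \geq 2\log\log k$ supplies the second asymptotic cleanly. Your error bookkeeping is also sound: you correctly note that the second-order term $(k-1)\alpha_k^2/2 = O(\log^2 k/k)$ must be retained in the error rather than dropped, which matches the $O(\log^2 k/k)$ error the paper carries after multiplying $\log x_k$ by $k$. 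The one cosmetic overstatement is the claim $\alpha_k = \Theta(\log k/k)$ at the point where you have only established $1/k < \alpha_k = O(\log k/k)$; fortunately your argument never uses the lower bound $\Omega(\log k/k)$, only the upper bound and the positivity needed to keep $b$ in the domain of $f$, so this does not affect correctness.
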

\begin{proof}
    We will do our calculations in $x_k = \frac{1}{1+\alpha_k}$, which is the unique solution of $x_k + x_k^k = 1$.
    
    We will guess for now that $x_k = 1 - \frac{\log k - \delta}{k}$ for $\delta \in [0,2\log \log k]$. We will see below that there is a solution $x_k$ of this form, which must be the unique solution.
    We calculate
    \begin{align*}
        \log x_k &= \log\left( 1 - \frac{\log k - \delta}{k}\right)  = -\frac{\log k - \delta}{k} + O\left(\frac{\log^2 k}{k^2}\right), \\
         \log x_k^k &= \delta -\log k + O\left(\frac{\log^2 k}{k}\right).
    \end{align*}
    Moreover $$\log(1-x_k)
        = \log(\log k - \delta) - \log k
        =  \log \log k + \log\left(1 - \frac{\delta}{\log k}\right) - \log k.$$
    The equation $x_k + x_k^k = 1$ implies $\log x_k^k = \log (1-x_k)$, so
    $$\delta - \log k + O\left(\frac{\log^2 k}{k}\right) = \log \log k + \log\left(1 - \frac{\delta}{\log k}\right) - \log k,$$
    which rearranges to
    $$\delta - \log\left(1 - \frac{\delta}{\log k}\right) = \log \log k + O\left(\frac{\log^2 k}{k}\right).$$
    This equation has a solution $\delta \in [0, 2 \log \log k]$ by the intermediate value theorem, and by inspection $\delta = b_k + O(\log^2 k / k)$.
    Therefore $$x_k = 1 - \frac{\log k - b_k + O(\log^2 k / k)}{k},$$ which implies the estimate on $\alpha_k = \frac{1-x_k}{x_k}$.
\end{proof}

Finally, we can give a series expansion for $x_k$ using Lagrange inversion. Note that the lower index of the binomial coefficient is $j$, as opposed to the $kj$ which appears in the definition of $h_{k,r}(x)$.
\begin{lem}\label{lem:lagrange}
    We have the following series expansion for $\alpha_k$:
    \begin{equation}
        x_k^N = \frac{1}{(1+\alpha_k)^N} = \sum_{j=0}^\infty (-1)^j\frac{N}{(k-1)j+N}\binom{kj+N-1}{j}.
    \end{equation}
\end{lem}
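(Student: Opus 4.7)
The plan is to derive the expansion as a direct application of the classical Lagrange inversion formula to a one-parameter deformation of the defining equation. First, I substitute $y = 1 - x$, so that $x + x^k = 1$ becomes $y = (1-y)^k$. To put this in Lagrange-inversion form, I introduce a formal parameter $t$ and consider the family of equations $y = t(1-y)^k$, whose analytic branch through $y(0) = 0$ satisfies $y(1) = 1 - x_k$, since at $t = 1$ we recover the original fixed-point equation and this branch picks out the solution in $(0,1)$.

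Next, I apply the Lagrange inversion formula in the form
\[
F(y(t)) \;=\; F(0) \;+\; \sum_{n\ge 1} \frac{t^{n}}{n}\bigl[y^{n-1}\bigr]\bigl(F'(y)\,\phi(y)^{n}\bigr),
\]
with $\phi(y) = (1-y)^k$ and $F(y) = (1-y)^N$, so that $F(y(t)) = x(t)^N$. Since $F(0) = 1$ and $F'(y)\phi(y)^n = -N(1-y)^{N-1+kn}$, the binomial theorem gives
\[
[y^{n-1}](1-y)^{N-1+kn} \;=\; (-1)^{n-1}\binom{kn+N-1}{n-1},
\]
so the $t^{n}$ coefficient of $x(t)^N$ equals $\tfrac{(-1)^n N}{n}\binom{kn+N-1}{n-1}$. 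A routine rewriting using
\[
\frac{1}{n}\binom{kn+N-1}{n-1} \;=\; \frac{1}{(k-1)n+N}\binom{kn+N-1}{n}
\]
puts the coefficients into the form stated in the lemma, and the $n=0$ contribution $F(0) = 1$ matches the $j=0$ term $\tfrac{N}{N}\binom{N-1}{0} = 1$. Setting $t=1$ then produces the claim.

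The one delicate point is convergence at $t = 1$: the Lagrange series in $t$ has radius strictly less than $1$ (for $k=2$ it is exactly $1/4$, arising from the branch-point singularity of $y(t)$ on the negative real axis where the critical point of $y \mapsto y/(1-y)^k$ lies), so the identity at $t = 1$ must be read as the natural analytic continuation of both sides along the positive real axis. The function $(1-y(t))^N$ indeed extends smoothly past $t = 1$, since implicit differentiation gives $y'(1) = x_k^k /\bigl(1 + kx_k^{k-1}\bigr)$, which is finite. Once this standard interpretation is adopted, the entire proof collapses to the two binomial manipulations above; the algebraic derivation has no further obstacle.
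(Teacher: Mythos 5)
Your proof is correct and uses the same core tool as the paper: Lagrange inversion applied to a one-parameter deformation of the fixed-point equation, evaluated at parameter $1$. The only structural difference is the choice of deformation — you place the parameter on $y=1-x$ via $y=t(1-y)^k$ and expand $F(y)=(1-y)^N$, whereas the paper deforms $x$ itself via $x+x^k=z$ (written as $x=z/(1+x^{k-1})$) and expands $x^N$, obtaining a lacunary series in $z$ supported on exponents $n=(k-1)j+N$; both yield the identical coefficient sequence, and your binomial manipulations check out. One thing you do that the paper does not: you flag that the Lagrange series has radius of convergence strictly less than $1$, so that setting the parameter to $1$ requires reading both sides as an analytic continuation. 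This is not a pedantic caveat — the series in the lemma genuinely diverges term-by-term for every integer $k\ge 2$ (for $k=2$, $N=1$ it is $\sum_j(-1)^jC_j$ with $C_j$ the Catalan numbers, growing like $4^j j^{-3/2}$), and the paper's own proof ends with "now setting $z=1$ recovers $x_k^N$" without comment, so it carries the same unaddressed subtlety. Your resolution via the implicit function theorem at $t=1$ (where $y'(1)=x_k^k/(1+kx_k^{k-1})$ is finite) is the right way to make the continuation precise.
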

\begin{proof}
    Rewrite the functional equation $x_k+x_k^k=1$ as $x_k = \frac{1}{1+x_k^{k-1}}$. Consider $x_k(z)$ given as the solution of $$ x_k(z) = \frac{z}{1+x_k(z)^{k-1}}.$$
    We now perform Lagrange inversion along the variable $z$ in $x_k(z)$ before setting $z=1$, following \cite[Equation (2.2.1)]{Gessel16}. We have 
    $$[z^n] x_k(z)^N = \frac{N}{n}\left[t^{n-N}\right] \frac{1}{(1+t^{k-1})^n}  = \frac{N}{n}\left[t^{n-N}\right] \sum_{ j=0}^\infty \binom{n-1+j}{j}(-1)^j t^{(k-1)j}.$$
    The inner coefficient is only nonzero when $n-N = (k-1)j$, or when $n= (k-1)j+N$ for some $j$. Therefore 
    \begin{align*}
        x_k(z)^N &= \sum_{n=0}^\infty z^n \frac{N}{n}\left[t^{n-N}\right] \frac{1}{(1+t^{k-1})^n}\\
        &= \sum_{j=0}^\infty z^{(k-1)j+N}  \frac{N}{(k-1)j+N} \binom{kj+N-1}{j}(-1)^j.
    \end{align*}
    Now setting $z=1$ recovers $x_k^N$.
\end{proof}

\section{Acknowledgements}
We thanks Brice Huang for his proof of the precise asymptotics of $\alpha_k$ and Christian Krattenthaler for his introduction of finite difference operators. As always, we thank Christophe Vignat for helpful discussions.

\bibliographystyle{amsalpha}
\bibliography{bibliography}

\end{document}